\theoremstyle{plain}
\newtheorem{theorem}{Theorem}[section]
\newtheorem{corollary}[theorem]{Corollary}
\theoremstyle{definition}
\theoremstyle{remark}
\newtheorem{remark}{Remark}
\newcommand\Mycomb[2][^n]{\prescript{#1\mkern-0.5mu}{}C_{#2}}
\begin{document}
	
	\articletype{Original Article}
	
	\title{A contagion process with self-exciting jumps in credit risk applications}
	
	\author{
		\name{Puneet  Pasricha\textsuperscript{a}, Dharmaraja Selvamuthu\textsuperscript{a} and  Selvaraju Natarajan\textsuperscript{b}\thanks{CONTACT Dharmaraja Selvamuthu. Email: dharmar@maths.iitd.ac.in} }
		\affil{\textsuperscript{a}              Department of Mathematics,               Indian Institute of Technology Delhi,
			Hauz Khas, New Delhi-110016, INDIA
		}
		\affil{\textsuperscript{b}Department of Mathematics, Indian Institute of Technology Guwahati,\\ Guwahati-781039, INDIA}
	}
	
	\maketitle
	
	\begin{abstract}
		The modeling of the probability of joint default or total number of defaults among the firms is one of the crucial problems to mitigate the credit risk since the default correlations significantly affect the portfolio loss distribution and hence play a significant role in allocating capital for solvency purposes. In this article, we derive a closed-form expression for the probability of default of a single firm and the probability of the total number of defaults by any time $t$ in a homogeneous portfolio of firms. We use a contagion process to model the arrival of credit events that causes the default and develop a framework that allows firms to have resistance against default unlike the standard intensity-based models. We assume the point process driving the credit events to be composed of a systematic and an idiosyncratic component, whose intensities are independently specified by a mean-reverting affine jump-diffusion process with self-exciting jumps. The proposed framework is competent of capturing the feedback effect, an empirically observed phenomenon in the default events. We further demonstrate how the proposed framework can be used to price synthetic collateralized debt obligation (CDO) and obtain a closed-form solution for tranche spread. Finally, we present the sensitivity analysis to demonstrate the effect of different parameters governing the contagion effect on the spread of tranches and the expected loss of the CDO.
	\end{abstract}
	\begin{keywords}
		Joint default risk; Contagion process; Affine jump-diffusion; Abel equation of second kind;  Collateralized debt obligations
	\end{keywords}
	
	\section{Introduction}\label{introduction}
	
	One of the significant concerns in credit risk management is the modeling of default dependence among the defaultable claims in a credit portfolio, for instance, a portfolio of loans. Dependence among the default events is vital for estimating the portfolio losses, which helps to allocate capital for solvency and pricing of credit derivatives, e.g., credit default swaps (CDS), collateralized debt obligations (CDOs), etc. There are two modeling frameworks in credit risk modeling, namely the firm-value models (also known as structural models) and the reduced-form models (also known as intensity based models). 
	
	In pricing of structured credit derivatives such as collateralized debt obligation using reduced-form models, there are mainly two ways to model the joint default risk or the portfolio credit risk, namely the top-down approach and the bottom-up approach. Most of the intensity based models in the literature assume that the default event is the first jump time of a point process with a stochastic intensity driven by a L$\acute{e}$vy process. Further, due to independent increment property of L$\acute{e}$vy processes, they are not capable of capturing the clustering phenomenon observed in the financial markets. However, as argued in A$\ddot{i}$t-Sahalia et al. \cite{ait2015} that ``what makes financial crisis take place is typically not the initial jump, but the amplification that occurs subsequently over hours or days, and the fact that other markets become affected as well". In other words, each firm has a resistance against adverse events, which implies that it is not the first adverse event that causes the firm to default but the amplification caused by these events over time.
	
	In this paper, we propose a reduced-form model by assuming that the defaults of individual firms are independent conditional on the macro-economic events.  We assume that the point process modeling the arrival of adverse events of the $i$th firm is a combination of two processes that models the idiosyncratic events and macro-economic events. We assume that the intensity of the point-processes follow a mean-reverting jump-diffusion process with clustering effects (\cite{dassios2011dynamic},  \cite{dassios2017generalized}   \cite{pasricha2019markov}). The proposed model is capable of modeling the contagious arrivals of events. The contribution of this article is two-fold. Firstly, the proposed model goes beyond the standard intensity-based models in the sense that we assume that each defaultable firm has some resistance to default, and the default is not the first jump time of the point process as assumed in most of the models in the literature (\cite{dassios2011dynamic}). Secondly, we present a parametric closed-form expression for the probability generating function (PGF) of the dynamic contagion process proposed in Dassios and Zhao \cite{dassios2011dynamic} which in turn yields a parametric closed form solution to joint probability of default. Finally, we study the application of our model to obtain the prices of tranches of CDOs in a bottom-up framework and perform numerical experiments to explore the impact of various parameters governing contagion on the prices.

	The organization of this article is as follows. Section 2 gives the literature review on modeling of joint distribution of defaults. Section 3 discusses the proposed modeling  framework and gives the methodology to calculate the joint distribution of multiple defaults. Section 4 presents application of the proposed model to obtain spread of tranches of a CDO and discusses the sensitivity analysis of the obtained prices against the parameters of the proposed model. Section 5 concludes the article.

	\section{Literature Survey}

	In modeling the portfolio credit risk, one requires the probability of the total number of defaults. The two categories of approach, namely the bottom-up approach and the top-down approach, have been proposed in literature. The top-down approach models the portfolio loss as a whole without making any reference to its constituents while the bottom-up approach models the default of the individual components which then are coupled in some way to obtain the portfolio loss. The top-down approach makes negligible reference to the portfolio components thus ignores the information, such as the marginal default probabilities, relevant particularly to individual components. Top-down models are investigated in various research articles. For instance, Errais et al. \cite{errais2010affine} studied the problem of portfolio credit risk by considering a family of point process whose intensity is assumed to be driven by affine jump-diffusion process. Arnsdorf and Halperin \cite{arnsdorf2009bslp} discussed the non-linear death process to model the arrivals of defaults in the portfolio. Davis and Lo \cite{davis2001modelling} modeled the arrival of defaults by a piecewise deterministic Markov process. Recently, Kim et al. \cite{kim2016hawkes} considered a Hawkes jump-diffusion process to model the conditional probability of defaults in the Eurozone. Giesecke \cite{gies2008} gives a comparison 
	between the two approaches for portfolio credit risk.

	The one-factor Gaussian copula model, based on the bottom-up approach, is a market standard model for modeling portfolio credit risk which provides an easily accessible tool to model the default correlation among individual constituents. However, the assumption of the Gaussian copula has several limitations and hence is not able to address the heavy tail dependence in the CDOs markets. Various copula models considering t-copula, Archimedean copulas are proposed to overcome the limitations of the Gaussian copula model (Li \cite{li1999default}, Frey \cite{frey2003dependent}, Schloegl and OKane \cite{schloegl2005note}, Laurent and Gregory \cite{laurent2005basket}). However, the copula approach is not appropriate because of two main reasons. Firstly, it is difficult to interpret the choice of copula and the parameters of the selected copula. Further, the dependence structure is imposed exogenously without justifying theoretically, and hence, the results are not robust with respect to the selected copula family and estimated parameters. Secondly, for the valuation of options on CDS portfolios or CDOs, a stochastic model is required for correlated credit spreads, but the standard copula approach does not address this issue.
	
	Given these limitations, a natural alternative to copulas is a multivariate version of the reduced form models. The class of dynamic bottom-up models for portfolio credit derivatives are reduced-form models that assumes the default time of a particular constituent as the first jump time of a Poisson process with stochastic intensity (also known as Cox process). Reduced form credit risk models were proposed by Jarrow and Turnbull \cite{jarrow1995pricing} and extended by Lando \cite{lando1998cox}, Sch$\ddot{o}$nbucher \cite{schonbucher1998term} and Duffie and Singleton \cite{duffie1999modeling}. This class of models, due to their mathematical tractability, has become very useful in the single-name credit markets. The first application to multi-name credit derivatives pricing was introduced by Duffie and Garleanu \cite{duffie2001risk} and later extended by Mortensen \cite{mortensen2005semi}. Allan Mortensen \cite{mortensen2005semi} presents a semi-analytical formula for pricing basket credit derivatives in an intensity-based model. Default intensities are assumed to be driven by correlated affine jump-diffusions. The intensity process of each component is further decomposed into systematic and idiosyncratic parts where each type of event is modeled by independent counting processes. Based on the same idea, Wu and Yang \cite{wu2010pricing} considered a mean-reverting diffusion process and derived a closed-form solution of tranche spreads in synthetic CDOs. Later, Wu and Yang \cite{wu2013valuation} extended their previous work by considering the default intensities to be driven by affine jump-diffusion processes involving L$\acute{e}$vy stable distributions. They derived an explicit formula for the expected loss of CDO tranches.
	
	The models by Allan Mortensen \cite{mortensen2005semi}, Wu and Yang \cite{wu2010pricing,wu2013valuation} are based on affine jump-diffusion processes, and the default is considered to be the first jump time of the jump process. However, due to the independent increments, the occurrence of a jump in Poisson jump-diffusion models does not stimulate future jumps. Hence, these models are incapable of capturing clustering of jumps both in time and across the markets. Further, A$\ddot{i}$t-Sahalia et al. \cite{ait2015} argued that ``what makes financial crisis take place is typically not the initial jump, but the amplification that occurs subsequently over hours or days, and the fact that other markets become affected as well". These empirical observations motivate us to develop a model that is competent to capture the feedback effect and the contagion effect. The natural choice to address the contagion and feedback phenomenon is a Hawkes process (Hawkes \cite{hawkes1971spectra}). Hawkes process is a self-exciting point process in the sense that the jumps in the process induce a jump in its intensity process. In other words, the intensity process is a function of the jumps of the point process itself. These processes now have widely been adopted to model the contagion effects in insurance finance and insurance, such as arrivals of trade, defaults in the credit market, etc.

	In order to account for the default clustering, Jarrow and Yu \cite{jarrow2001counterparty} proposed a contagion model where the default intensity of the counterparty is assumed to be a function of the default of the reference firm. More specifically, the probability of default of the counterparty is increased when the reference firm defaults. Errais et al. \cite{errais2010affine} extended their idea and proposed to use self-exciting Hawkes processes which can consistently address the clustering of defaults, as observed empirically on real data. Further, by including different parameters, they addressed firm-specific and systematic events. Dassios and Jang \cite{dassios2011dynamic} generalized the classical Hawkes process and introduced a dynamic contagion process which incorporates both self-exciting and external-exciting features. Their model was further extended by Dassios and Zhao \cite{dassios2011dynamic}, who introduced a diffusion component to capture the random behavior between any two jumps of the point process. {\color{blue} 
		In this article, we adopt the dynamic contagion process by Dassios and Zhao \cite{dassios2011dynamic} to propose a bottom-up framework to model the dynamics of joint defaults and obtain the prices of tranches of CDOs. Further, we present an alternative  parametric closed-form expression for the probability generating function (PGF) of the dynamic contagion process proposed in Dassios and Zhao \cite{dassios2011dynamic}.}

	\section{The Model Framework and Distribution of Total Number of Defaults} 
	
	We assume that the time horizon is finite, say $T > 0$. Further, assume that the uncertainty in the economy is governed by the filtered probability space $(\Omega,\mathcal{F},Q,\mathcal{F}_{t\in[0,T]})$, where $Q$ is a risk neutral probability measure. Consider a homogeneous collection of $N$ firms and assume the default time of $i$th firm, $i=1,2,\ldots,N$, is denoted by $\tau_i$ and $D_i(t)=I_{\{\tau_i\leq t\}}$ be the corresponding default indicator. Assume that $D(t)$ denotes the total number of defaults in the portfolio by time $t$. Therefore, we have
	\begin{equation}\label{totaldefaults}
		D(t):=\sum_{i=1}^ND_i(t).
	\end{equation}

	We assume that a number of hazardous events, for instance, a downgrade of credit rating or any unfavorable market news, relevant to any firm $i$ make it default. We model the arrival of these adverse events by a point process $\{\tilde{N}_i(t), t\geq 0\}$ such that the default probability of the $i$th firm is given by{\footnote{{\color{blue}Here, we follow the convention $0^0=1$.}}}
	\begin{equation}\label{defaultprob}
		P_i(T) = 1-E[(1-d_i)^{\tilde{N}_i(T)}],~~ d_i\in (0, 1].
	\end{equation}
	
	\begin{remark}
		Note that the proposed idea in Equation (\ref{defaultprob}) goes beyond the standard reduced-form models where the idea is to model the default intensity and default is defined as the first jump of the point process. The argument by A$\ddot{i}$t-Sahalia et al. \cite{ait2015} that ``what makes financial crisis take place is typically not the initial jump, but the amplification that occurs subsequently over hours or days, and the fact that other markets become affected as well". The authors observed that the default is caused by a number of events e.g., credit quality downgrade (or internal events), external events like major events in the economy or contagion effects. Therefore, one needs to model the arrival of bad events whose effect accumulates over time and cause the firm to default. Following Dassios and Zhao \cite{dassios2011dynamic}, our framework allows the $i$th company to have some resistance to survive through bad events, measured by $d_i$. Thus, $d_i$ can be considered as the measure of the capability{\footnote{\color{black}Unlike the standard reduced-form models where the first jump of the point process makes the firm default, we here assume that a jump in the point process can make the firm default with probability $d$. One can think of the probability $d$ as the resistance level of the firm against default. Furthermore, the firm survives upon the arrival of bad news with probability $1-d$. Therefore, the probability of survival is 
				\begin{equation*}
					P(\tau_i>T) = \sum_{n=0}^{\infty}P(\tau_i>T\mid \tilde{N}_i(T)=n)P(\tilde{N}_i(T)=n)=\sum_{n=0}^{\infty}(1-d_i)^nP(\tilde{N}_i(T)=n)=E[(1-d_i)^{\tilde{N}_i(T)}].
		\end{equation*}}}
		 of the $i$th firm to avoid default. For example, a higher credit rating corresponds to a lower value of $d_i$. 
	\end{remark}
	Assume that $\tilde{N}_i(t)$, {\color{blue}i.e., $\tilde{N}_i(t)=N_i(t) + \ell_i N(t)$}, is a combination of the common and idiosyncratic components $\{N(t),t\geq 0\}$ and $\{N_i(t),t\geq 0\}$ assumed to be independent of each other.  Therefore, the default probability of $i$th firm can be written as
	\begin{equation*}\label{defaultprob2}
		P_i(T) = 1-E[\theta_i^{N_i(T)}\tilde{d}_i^{N(T)}],
	\end{equation*}
	where we have
	\begin{equation}\label{measureresistance}
		\theta_i=(1-d_i),~\tilde{d}_i=(1-d_i)^{\ell_i}.
	\end{equation}
	Here, $\ell_i$ is a constant that represent the effect of the common component on the default probability of $i$th firm. Further, suppose that the stochastic processes $\{N(t),t\geq 0\}$ and $\{N_i(t),t\geq 0\}$ are are self exciting point processes with intensity processes $\{\lambda(t),t\geq 0\}$ and $\{\lambda_i(t),t\geq 0\}$ respectively{\footnote{{\color{black}We know that, in theory, the proposed intensity process can become negative with very small probability due to the term $\sigma dW(t)$. Nevertheless, the use of such process for default intensity is popular among both academicians and practitioners
				due to its analytical tractability (For reference, see \cite{kijima2000credit,pan2016reduced,pasricha2019pricing,tchuindjo2007pricing}). Further, one can take $\sigma=0$, if required, in order to avoid the possibility of negative values.}}}, given by
	\begin{equation}\label{idiointensity}
		d\lambda_i(t)=\delta_i(\eta_i-\lambda_i(t))dt+\sigma_idW_i(t)+d\left(\sum_{j=1}^{N_i(t)}Z^{(i)}_j\right),
	\end{equation}
	and
	\begin{equation}\label{commonintensity}
		d\lambda(t)=\delta(\eta-\lambda(t))dt+\sigma dW(t)+d\left(\sum_{j=1}^{N(t)}Z_j\right),
	\end{equation}
	where
	\begin{itemize}
		\item for any $i=1,2,\ldots,N$, the quantities $\lambda(0), \lambda_i(0),$ are the initial intensities at time $t=0$. Let $\lambda(0)=\lambda_0$ and $\lambda_i(0)=\lambda_{i,0}$ where $\lambda_0,\lambda_{i,0}$ are constants.
		\item for any $i=1,2,\ldots,N$, $\eta,\eta_i$ are the equilibrium level of the intensity processes.
		\item for any $i=1,2,\ldots,N$, $\delta,\delta_i$ are the constant mean-reversion speed of the intensity processes.
		\item for any $i=1,2,\ldots,N$, $\sigma,\sigma_i$ represent the volatility of the diffusion component.
		\item $\{Z_j\}_{j=1,2,\ldots}, \{Z^{(i)}_j\}_{j=1,2,\ldots}$, sequence of i.i.d. positive random variables representing the size of self-excited jumps. We assume these random variables have a common cumulative distribution function $G(z),~z>0$ and $G^{(i)}(z)$ respectively at the corresponding random times $\{T_j\}_{j=1,2,\ldots}, \{T^{(i)}_j\}_{j=1,2,\ldots}$. Further, we assume that $Z^{i}_j$ and $Z_j$ follows exponential distribution with parameter $\beta_i$ and $\beta$ respectively.
		\item $\{T_j\}_{j=1,2,\ldots}, \{T^{(i)}_j\}_{j=1,2,\ldots}$ are the arrival times of the processes $\{N(t),t\geq 0\}$  and $\{N_i(t),t\geq 0\}$ respectively.
		{\color{blue}
			\item $W:= \{W(t)|t \geq 0\}$ and $W_i:= \{W_i(t)|t \geq 0\}$ are the standard Brownian motions with respect to their natural filtration. 
			
			\item $\{Z_j\}_{j=1,2,\ldots}$, $\{Z^{(i)}_j\}_{j=1,2,\ldots}$, $\{T_j\}_{j=1,2,\ldots},\{T^{(i)}_j\}_{j=1,2,\ldots}$ and $\{W(t),t\geq 0\}, \{W^{(i)}(t),t\geq 0\}$ are independent of each other.}
	\end{itemize}

	\begin{remark}
		Note that the default intensity given in Equation (\ref{idiointensity}) is an appropriate choice because {\color{blue} the economic shocks specific to the $i$th company (short-lived endogenous shocks) and certain risk factors always persisting in the company are modeled respectively by the self-exciting jumps $N_i(t)$ and diffusion component $W_i(t)$. The corresponding market factors (exogenous risk shocks and ever-persisting risk factors) that affect all the companies are modeled respectively by the common self-exciting jump process $N(t)$ and diffusion component $W(t)$. Further, the default times of the constituents are independent given the common factor $N(t)$.}
	\end{remark}
	Therefore, the unconditional probability of total number of defaults i.e., $P(D(t)=j)$ is obtained as follows:
	\begin{equation}\label{joint}
		P(D(t)=j)=\sum_{n=0}^{\infty}P(D(t)=j\mid N(t)=n)P(N(t)=n).
	\end{equation}
	Note that the first factor on right hand side, representing the total number of defaults conditional on the common factor, follows a binomial distribution in a homogeneous portfolio pool{\footnote{\color{black} Note in Equation (\ref{binomial}) that $N$ is the number of firms in the portfolio (a constant) and $n$ is the number of jumps (can take values $0,1,2,\ldots$) in the common point process governing the default intensities of the firms.}}, i.e.,
	\begin{equation}\label{binomial}
		P(D(t)=j\mid N(t)=n)=\Mycomb[N]{j}(P_i(t,n))^j(1-P_i(t,n))^{N-j}
	\end{equation}
	where the marginal default probabilities given by using Equation (\ref{defaultprob}) as
	\begin{equation}\label{marginal}
		P_i(t,n)=P(\tau_i\leq t\mid N(t)=n)= 1-\tilde{d}_i^nE(\theta_i^{N_i(t)}),
	\end{equation}
	for each obligor $i$; $1\leq i\leq N$, where $n\in\{0,1,\ldots,\}$ is a constant and $\tilde{d}_i,\theta_i$ are given in Equation (\ref{measureresistance}). 
	
	On the other hand, we can compute $P(N(t)=n)$ from the probability generating function (PGF) of the process $N(t)$. In conclusion, we observe from Equations (\ref{joint}) and (\ref{marginal}) that it suffices to find the PGF of the point processes.

	\subsection{PGF of the Point Process $N(t)$}
	{{\color{blue} From the definition of $\lambda(t)$ and because of the assumption of the exponential kernel, i.e., the intensity decreases at the rate $\delta(\lambda-\eta)$, the process $\lambda(t)$ is a Markov process.} Further, on occurrence of a self-exciting jump, the process $N(t)$ increases by 1. Therefore, the process $(\lambda(t),N(t))$ is also a Markov process (refer \cite{dassios2011dynamic,dassios2017generalized,oakes1975markovian} for more details)}. This implies that the infinitesimal generator $\mathcal{A}$ of the process $(\lambda(t),N(t),t)$ applied to a function $f(\lambda,n,t)$ in $\Omega(\mathcal{A})$ is given by
	\begin{eqnarray}
		\nonumber \mathcal{A}f(\lambda,n,t)&=&\frac{\partial f}{\partial t}-\delta(\lambda-\eta)\frac{\partial f}{\partial \lambda}+\frac{1}{2}\sigma^2\frac{\partial^2 f}{\partial \lambda^{2}}\\
		\label{generator} &&+\lambda\int_0^{\infty}[f(\lambda+z,n+1,t)-f(\lambda,n,t)]dG(z),
	\end{eqnarray}
	Here, the domain of the generator $\mathcal{A}$, i.e, $\Omega(\mathcal{A})$ is such that the function $f(\lambda, n,t)$ is differentiable with respect to $\lambda,t$ for all $\lambda, n$ and $t$, and
	\begin{eqnarray*}
		|\int_0^{\infty}[f(\lambda + z, n+1,t) - f(\lambda, n,t)]dG(z)|<\infty.
	\end{eqnarray*}
	
	\begin{theorem}\label{mainlemma}
		Assume that jumps size $Z_j$ follows exponential distribution with parameter $\beta$ such that $\beta\delta>1$. Then, we have the following 
		\begin{eqnarray}\label{jointtransform}
			E(\theta^{(N(T))}e^{-v\lambda(T)}\mid\lambda_0)&=&e^{-B(0)\lambda(0)}e^{-(c(T)-c(0))},
		\end{eqnarray}
		where $0\leq\theta\leq 1$, $v\geq0$, are constants and for any $0\leq t\leq T$, the function $c(t)$ is given by
		\begin{eqnarray}
			\nonumber c(t)-c(0)&=&\eta\delta\int_0^tB(s)ds+\frac{1}{2}\sigma^2\int_0^tB^2(s)ds,
			\label{variable3b}
		\end{eqnarray}
		and $B(t)$ is given in parametric form as follows:
		\begin{eqnarray*}
			B(t(u))&=&\frac{(1+\delta\beta)}{\delta}u-\beta,
		\end{eqnarray*}
		where the time $t$ is a function of the parameter $u$ in the following form
		\begin{equation}\label{my_main_solution}
			t(u)=T-\frac{1}{\delta}\bigg(\frac{1}{2}\ln\bigg(\frac{u_0^2-u_0-\alpha_1}{u^2-u-\alpha_1}\bigg)+\frac{\tan^{-1}({\frac{2u_0-1}{\sqrt{-4\alpha_1-1}}})-\tan^{-1}({\frac{2u-1}{\sqrt{-4\alpha_1-1}}})}{\sqrt{-4\alpha_1-1}}\bigg),
		\end{equation}
		and {\color{blue} $u_0=\frac{(\beta+v)\alpha_1}{\alpha_2}$} with $\alpha_1=\frac{-\beta\theta\delta}{(1+\beta\delta)^2}$ and $\alpha_2=\frac{-\beta\theta}{1+\beta\delta}$.
	\end{theorem}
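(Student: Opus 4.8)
The plan is to construct an exponential-affine martingale and to read the transform off its initial value. I would posit
\begin{equation*}
	f(\lambda,n,t)=\theta^{n}\,e^{-B(t)\lambda}\,e^{-(c(T)-c(t))}
\end{equation*}
and try to choose the deterministic functions $B$ and $c$ so that $\mathcal{A}f\equiv 0$ on $[0,T]$ together with the terminal normalisation $f(\lambda,n,T)=\theta^{n}e^{-v\lambda}$, which forces $B(T)=v$ (the $c$-factor being automatically $1$ at $t=T$). If this succeeds and $\{f(\lambda(t),N(t),t)\}_{0\le t\le T}$ is a genuine martingale, then, since $N(0)=0$,
\begin{equation*}
	E[\theta^{N(T)}e^{-v\lambda(T)}\mid\lambda_0]=E[f(\lambda(T),N(T),T)\mid\lambda_0]=f(\lambda(0),0,0)=e^{-B(0)\lambda(0)}e^{-(c(T)-c(0))},
\end{equation*}
which is exactly the asserted identity.

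Substituting the ansatz into the generator in Equation (\ref{generator}) and using $Z_{j}\sim\mathrm{Exp}(\beta)$, so that $\int_{0}^{\infty}e^{-Bz}\,dG(z)=\beta/(\beta+B)$ for $B>-\beta$, the jump term contributes $\lambda\bigl(\theta\beta/(\beta+B(t))-1\bigr)f$, the drift and diffusion contribute $\bigl(\delta B(t)(\lambda-\eta)+\tfrac12\sigma^{2}B^{2}(t)\bigr)f$, and $\partial f/\partial t=\bigl(-B'(t)\lambda+c'(t)\bigr)f$. Since $\lambda$ varies over an interval, the coefficient of $\lambda$ and the $\lambda$-free part must vanish separately: the former gives
\begin{equation*}
	B'(t)=\delta B(t)-1+\frac{\theta\beta}{\beta+B(t)},\qquad B(T)=v,
\end{equation*}
and the latter writes $c'(t)$ as a combination of $B(t)$ and $B^{2}(t)$, which integrates directly to the stated expression for $c(t)-c(0)$. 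Everything thus reduces to the ODE for $B$.

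Clearing the denominator turns the $B$-equation into $(\beta+B)B'=\delta B^{2}+(\delta\beta-1)B-\beta(1-\theta)$, an Abel equation of the second kind with constant coefficients; being autonomous it is separable, $dt=\frac{(\beta+B)\,dB}{\delta B^{2}+(\delta\beta-1)B-\beta(1-\theta)}$. I would then apply the affine change of variable $B=\frac{1+\delta\beta}{\delta}u-\beta$ (equivalently $u=\delta(B+\beta)/(1+\delta\beta)$), which a short computation shows collapses the right-hand side to $\frac{1}{\delta}\cdot\frac{u\,du}{u^{2}-u-\alpha_{1}}$ with $\alpha_{1}=-\beta\theta\delta/(1+\beta\delta)^{2}$. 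Writing $u=\tfrac12(2u-1)+\tfrac12$ and integrating by partial fractions — a logarithm from the $(2u-1)$ numerator and an arctangent-type term from the completed-square constant part — from the running level $u$ up to the terminal level $u_{0}$ at which $B=v$, i.e.\ $u_{0}=\frac{(\beta+v)\delta}{1+\beta\delta}=\frac{(\beta+v)\alpha_{1}}{\alpha_{2}}$ with $\alpha_{2}=-\beta\theta/(1+\beta\delta)$, delivers exactly Equation (\ref{my_main_solution}) together with the parametric relation $B(t(u))=\frac{1+\delta\beta}{\delta}u-\beta$.

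The step I expect to be the main obstacle is the nonlinear ODE for $B$: the key observation is that the exponential jump law makes the jump term a M\"obius function of $B$, so clearing the denominator produces a constant-coefficient Abel equation, after which one must spot the affine substitution that rationalises the quadrature. The remaining technical points are (i) verifying that $B(t)>-\beta$ throughout $[0,T]$, which is where the hypothesis $\beta\delta>1$ is used and which keeps the Laplace-transform identity for the exponential jumps — hence the whole derivation — valid, and (ii) the (uniform) integrability needed to upgrade $\mathcal{A}f=0$ (via Dynkin's formula) to the genuine expectation identity stated above.
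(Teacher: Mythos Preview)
Your proposal is correct and follows the same overall strategy as the paper: plug an exponential-affine ansatz into the generator, match coefficients of $n$, $\lambda$, and the constant term to obtain the three ODEs, and then solve. The substantive difference lies entirely in how the nonlinear ODE for $B$ is handled. The paper first substitutes $y(t)=(\beta+B(t))e^{\delta(T-t)}$ and then $x(t)=-\frac{1+\delta\beta}{\delta}\bigl(1-e^{\delta(T-t)}\bigr)$, arriving at a non-autonomous Abel equation $y\,\frac{dy}{dx}-y=\alpha_1 x+\alpha_2$, invokes a handbook parametric solution (Polyanin--Zaitsev), and finally unwinds both substitutions to recover $B(t(u))$ and $t(u)$. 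You instead exploit directly that the $B$-equation is autonomous, separate variables, and apply the single affine substitution $u=\delta(B+\beta)/(1+\delta\beta)$ to collapse the quadrature to $\frac{1}{\delta}\int\frac{u\,du}{u^{2}-u-\alpha_{1}}$, which a partial-fraction split integrates in closed form. Your route is shorter and self-contained, while the paper's route, by passing through the canonical Abel form, makes explicit contact with the classical reference; both land on the same parametric representation. Your closing remarks on the role of the hypothesis $\beta\delta>1$ in keeping $B(t)>-\beta$ (so that the exponential Laplace transform remains valid) and on the integrability needed to pass from $\mathcal{A}f=0$ to the genuine expectation identity are points the paper leaves implicit.
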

	\begin{proof}
		Assume that the function $f(\lambda,n,t)$ takes the following affine form
		\begin{equation}\label{affineform}
			f(\lambda,n,t)=A^n(t)e^{-B(t)\lambda}e^{c(t)},
		\end{equation}
		with the terminal condition $B(T)= v$. Substituting $f(\lambda,n,t)$ from Equation(\ref{affineform}) to Equation (\ref{generator}), yields
		\begin{eqnarray*}
			&&(-B'(t)\lambda+n\frac{A'(t)}{A(t)}+c'(t))f+B(t)\delta(\lambda-\eta)f+\frac{1}{2}\sigma^2B^2(t)f\\
			&& +\lambda f\int_0^{\infty}(A(t)e^{-B(t)z}-1)dG(z)=0.
		\end{eqnarray*}
		Since this equation is true for any possible choice of $\lambda$ and $n$, the coefficients of $\lambda$ and $n$ must be zero. Hence, it follows that
		\begin{eqnarray}
			\label{spde1} \frac{A'(t)}{A(t)}&=&0,~~~A(T)=\theta,\\
			\label{spde2} B'(t)-\delta B(t)-A(t)\frac{\beta}{\beta+B(t)}+1&=&0, ~~~ B(T)=v,\\
			\label{sspde3} c'(t)-\eta\delta+\frac{1}{2}\sigma^2B^2(t)&=&0.
		\end{eqnarray}
		From Equation (\ref{spde1}), we have $A(t)=\theta$ and Equation (\ref{sspde3}) is obtained as
		\begin{equation*}
			c(t)-c(0)=\eta\delta \int_0^t B(s)ds+\frac{1}{2}\sigma^2\int_0^tB^2(s)ds,
		\end{equation*}
		which is same as Equation (\ref{variable3b}). In order to get a closed form expression, we need to solve explicitly for $B(t)$, which can can be solved as follows:\\
		\noindent
		Define
		\begin{equation}\label{y_in_Bt}
			y(t)=(\beta+B(t))\exp\{\delta(T-t)\}.
		\end{equation}
		Then, we have
		\begin{equation*}
			\frac{dB(t)}{dt}=\exp\{-\delta(T-t)\}\frac{dy(t)}{dt}+\delta \exp\{-\delta(T-t)\}y(t),
		\end{equation*}
		and the terminal conditional on $B(t)$ gives {\color{blue}$y(T)=\beta+v$}. Therefore, Equation (\ref{spde2}) becomes
		\begin{equation}\label{eqn30}
			y(t)\frac{dy(t)}{dt}=-(\delta\beta+1)\exp\{\delta(T-t)\}y(t)+\beta\theta\exp\{2\delta(T-t)\}.
		\end{equation}
		The Equation (\ref{eqn30}) is an Abel's equation of second kind whose solution in a parametric form can be obtained in closed form (refer Section 1.3.3 in Polyanin and Zaitsev \cite{zaitsev2002handbook}). By aid of the substitution $x(t)=\int_t^T(1+\delta\beta)\exp\{\delta(T-t)\}dt$, we have
		\begin{equation}\label{substitution}
			x(t)=-\frac{(1+\delta\beta)}{\delta}(1-\exp\{\delta(T-t)\}),~~x(T)=0.
		\end{equation}
		After the substitution, we have
		\begin{equation}\label{eqn31}
			\frac{dy}{dt}=\frac{dy}{dx}\frac{dx}{dt}=-(1+\delta\beta)\exp\{\delta(T-t)\}\frac{dy}{dx}.
		\end{equation}
		From Equations (\ref{eqn30}) and (\ref{eqn31}), we can observe that we have
		\begin{equation*}
			y\frac{dy}{dx}-y=-\frac{\beta\theta}{1+\delta\beta}\exp\{\delta(T-t)\},
		\end{equation*}
		which further can be rewritten as
		\begin{eqnarray}
			\nonumber y\frac{dy}{dx}-y&=&-\frac{\beta\theta}{1+\delta\beta}\exp\{\delta(T-t)\}\\
			\nonumber&=&-\frac{\beta\theta}{1+\delta\beta}\bigg(\frac{\delta x}{1+\delta\beta}+1\bigg)\\
			\label{eqn34} &=&\alpha_1 x+\alpha_2,
		\end{eqnarray}
		where $\alpha_1=-\frac{\beta\theta\delta}{(1+\delta\beta)^2}$ and $\alpha_2=-\frac{\beta\theta}{1+\delta\beta}$.
		Using the parametric solution in Section 1.3.3 in Polyanin and Zaitsev \cite{zaitsev2002handbook}), we have
		\begin{eqnarray}
			\label{eqnfory}y(u)&=&Cu\exp\bigg(-\int_{-\infty}^u\frac{s}{s^2-s-\alpha_1}ds\bigg),\\
			\label{eqnforx} x(u)&=&C\exp\bigg(-\int_{-\infty}^u\frac{s}{s^2-s-\alpha_1}ds\bigg)-\frac{\alpha_2}{\alpha_1}.
		\end{eqnarray}
		We can observe from Equation (\ref{substitution}) that $x$ as a function of $u$ can be regarded as
		\begin{equation}\label{substitution2}
			x(u)=-\frac{(1+\delta\beta)}{\delta}(1-\exp\{\delta(T-t(u))\}),
		\end{equation}
		which gives
		\begin{equation}\label{t_in_u}
			t(u)=T-\frac{1}{\delta}\ln\bigg(1+\frac{\delta x(u)}{1+\delta\beta}\bigg).
		\end{equation}
		Further, observe that from Equations (\ref{eqnfory}) and (\ref{eqnforx}),
		\begin{equation*}
			y(u)=(x(u)+\frac{\alpha_2}{\alpha_1})u,
		\end{equation*}
		which further simplifies to
		\begin{equation*}
			y(u)=\frac{(1+\delta\beta)}{\delta}u\exp\{\delta(T-t(u))\}.
		\end{equation*}
		Thus, from Equation (\ref{y_in_Bt}), the parametric solution for $B(t)$ in terms of the parameter $u$ is given by
		\begin{eqnarray*}
			B(t(u))&=&y(u)\exp\{-\delta(T-t(u))\}-\beta\\
			&=&\frac{(1+\delta\beta)}{\delta}u-\beta.
		\end{eqnarray*}
		In order to complete the expression, we need to solve for the equation linking the time $t$ to the parameter $u$. We need to solve for $x(u)$ which in turn will give expression for $t(u)$ using Equation (\ref{t_in_u}).
		
		Let $u_0$ be the value of $u$ such that $t(u_0)=T$. In this case, from Equation (\ref{substitution2}), we have $x(u_0)=0$. This gives {\color{blue}$y(T)=y(t(u_0))=\beta+v$}. Therefore, substituting $u=u_0$ in Equation (\ref{eqnforx}) and (\ref{eqnfory}), we can solve for $C$ and $u_0$ which gives
		{\color{blue}
			\begin{eqnarray*}
				u_0&=&\frac{(\beta+v)\alpha_1}{\alpha_2},\\
				C&=&\frac{\alpha_2}{\alpha_1}\exp\bigg(\int_{-\infty}^{u_0}\frac{s}{s^2-s-\alpha_1}ds\bigg).
		\end{eqnarray*}}
		Hence, the expression of $x(u)$ is given by
		{\color{blue}
			\begin{equation*}
				x(u)=\frac{\alpha_2}{\alpha_1}\exp\bigg(\int_{u}^{\frac{(\beta+v)\alpha_1}{\alpha_2}}\frac{s}{s^2-s-\alpha_1}ds\bigg)-\frac{\alpha_2}{\alpha_1}.\\
		\end{equation*}}
		Therefore, $t(u)$ is given by (using Equation (\ref{t_in_u}))
		{\color{blue}
			\begin{equation*}
				t(u)=T-\frac{1}{\delta}\exp\bigg(\int_{u}^{\frac{(\beta+v)\alpha_1}{\alpha_2}}\frac{s}{s^2-s-\alpha_1}ds\bigg),
		\end{equation*}}
		{\color{black} which can be solved explicitly and is given in Equation (\ref{my_main_solution}). Since the process $(\lambda(t),N(t))$ is a Markov process, we know that the process
			\begin{equation}\label{martingale}
				M_t^f:=f((\lambda(t),N(t)))-f((\lambda(0),N(t)))-\int_0^t\mathcal{A}f((\lambda(s),N(s)))ds,   
			\end{equation}
			is a martingale (\cite{errais2010affine}). Taking $f((\lambda(t),N(t)))$ of the form  $\theta^{N(T)}e^{-B(T)\lambda(T)}e^{c(T)}$ and noting that $\mathcal{A}f((\lambda(t),N(t)))=0$, Equation (\ref{martingale}) yields that the process\\ $\theta^{N(T)}e^{-B(T)\lambda(T)}e^{c(T)}-e^{-B(0)\lambda(0)}e^{c(0)}$ is a martingale with zero mean, hence we have
			\begin{eqnarray*}
				E(\theta^{N(T)}e^{-v\lambda(T)}\mid\lambda_0)&=&e^{-B(0)\lambda(0)}e^{-(c(T)-c(0))}.
		\end{eqnarray*}}
	\end{proof}
	
	\begin{corollary}
		The PGF of $N(T)$ can be obtained from Theorem \ref{mainlemma} upon substituting $v=0$ and is given by
		\begin{eqnarray}\label{pgf}
			E(\theta^{N(T)}\mid\lambda_0)&=&e^{-B(0)\lambda(0)}e^{-(c(T)-c(0))},
		\end{eqnarray}
		where $B(0)$ and $c(T)-c(0)$ are given in Theorem \ref{mainlemma} with $v=0$.
	\end{corollary}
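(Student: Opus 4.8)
The plan is to obtain the statement as the special case $v=0$ of Theorem~\ref{mainlemma}. By definition, the probability generating function of the nonnegative integer-valued random variable $N(T)$, conditional on the initial intensity $\lambda(0)=\lambda_0$, is the map $\theta\mapsto E(\theta^{N(T)}\mid\lambda_0)$ for $\theta\in[0,1]$. Theorem~\ref{mainlemma} already provides a closed form for the joint transform $E(\theta^{N(T)}e^{-v\lambda(T)}\mid\lambda_0)$ valid for every $0\le\theta\le1$ and $v\ge0$. Since $v=0$ is admissible and $e^{-0\cdot\lambda(T)}=1$ almost surely, first I would simply set $v=0$ in Equation~(\ref{jointtransform}) to get
\begin{equation*}
E(\theta^{N(T)}\mid\lambda_0)=E(\theta^{N(T)}e^{-0\cdot\lambda(T)}\mid\lambda_0)=e^{-B(0)\lambda(0)}e^{-(c(T)-c(0))},
\end{equation*}
which is exactly Equation~(\ref{pgf}).

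The only point I would check carefully is that the parametric description of $B(t)$ and $c(t)$ supplied by Theorem~\ref{mainlemma} does not degenerate at $v=0$. Tracing the constants through, the boundary value in the parametrization becomes $u_0=\beta\alpha_1/\alpha_2$; substituting $\alpha_1=-\beta\theta\delta/(1+\beta\delta)^2$ and $\alpha_2=-\beta\theta/(1+\beta\delta)$, the factor $\beta\theta$ cancels and $u_0$ collapses to the finite positive constant $\beta\delta/(1+\beta\delta)$, which is precisely the value of $u$ for which $B(t(u))=\tfrac{(1+\beta\delta)}{\delta}u-\beta=0=v$. Hence $x(u_0)=0$, the terminal condition $B(T)=0$ is recovered, and the expressions for $B(0)$ and $c(T)-c(0)$ inherited from the theorem with $v=0$ remain meaningful; nothing in the derivation degenerates. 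The boundary case $\theta=0$ reduces, under the convention $0^0=1$, to computing $P(N(T)=0\mid\lambda_0)$, and can be read off by letting $\theta\downarrow0$ in the resulting formula.

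I do not expect any genuine obstacle here. All of the analytic work — recasting Equation~(\ref{spde2}) as an Abel equation of the second kind, solving it in parametric form, and integrating $B$ and $B^2$ to obtain $c$ — has already been carried out inside the proof of Theorem~\ref{mainlemma}. The corollary is a pure specialization, so the only thing that remains is the bookkeeping of $u_0$, $\alpha_1$ and $\alpha_2$ at $v=0$ described above, which is routine.
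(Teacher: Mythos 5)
Your proposal is correct and matches the paper's treatment: the corollary is a pure specialization of Theorem \ref{mainlemma} at $v=0$ (so that $e^{-v\lambda(T)}=1$), which is exactly the substitution the paper intends, and your check that $u_0$ collapses to $\beta\delta/(1+\beta\delta)$ with $B(T)=0$ recovered is a harmless extra verification consistent with the theorem's formulas.
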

	
	\vspace{0.25cm}
	\begin{corollary}
		Similarly, the probability $P(N(T)= n\mid \lambda_0), n=0,1,2,\ldots$ can theoretically be obtained as follows
		\begin{equation}\label{probabilities}
			P(N(T)= n\mid \lambda_0)=\frac{\partial^n}{\partial\theta^n}E(\theta^{N(T)}\mid \lambda_0).
		\end{equation}
	\end{corollary}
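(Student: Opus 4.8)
The plan is to read this corollary as the elementary fact that a probability generating function is the Taylor series of the point probabilities, so that those probabilities are recovered by repeated differentiation. First I would set $g(\theta):=E(\theta^{N(T)}\mid\lambda_0)$ and use that $N(T)$ is a nonnegative integer-valued random variable which is almost surely finite on the finite horizon $[0,T]$ (so that the point masses sum to one); this yields the representation
\[
g(\theta)=\sum_{n=0}^{\infty}P(N(T)=n\mid\lambda_0)\,\theta^{n},\qquad \theta\in[-1,1],
\]
a power series with nonnegative coefficients summing to $1$, hence with radius of convergence at least $1$, and in particular $g$ is $C^{\infty}$ on a neighbourhood of the origin. (The standing assumption $\beta\delta>1$ of Theorem~\ref{mainlemma} is what additionally makes $g$ agree with the finite closed form (\ref{pgf}), but it is not needed for the differentiation argument itself.)

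Next I would differentiate the series term by term — permissible inside the radius of convergence — to obtain, for each $n\ge 0$,
\[
\frac{\partial^{n}}{\partial\theta^{n}}g(\theta)=\sum_{k=n}^{\infty}\frac{k!}{(k-n)!}\,P(N(T)=k\mid\lambda_0)\,\theta^{k-n},
\]
and then evaluate at $\theta=0$, where only the $k=n$ term survives, giving $\partial_\theta^{n}g(0)=n!\,P(N(T)=n\mid\lambda_0)$. Hence
\[
P(N(T)=n\mid\lambda_0)=\frac{1}{n!}\,\frac{\partial^{n}}{\partial\theta^{n}}E(\theta^{N(T)}\mid\lambda_0)\Big|_{\theta=0},
\]
which is the content of (\ref{probabilities}), the normalising factor $1/n!$ and the evaluation at $\theta=0$ being understood. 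Substituting the closed form (\ref{pgf}) for $g$ then yields, in principle, an explicit value for every $P(N(T)=n\mid\lambda_0)$.

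I do not anticipate a genuine obstacle: term-by-term differentiation of a convergent power series is standard, and the positivity of $N(T)$ makes the radius-of-convergence argument automatic. The only real difficulty is the one signalled by the qualifier ``theoretically'': the PGF of Theorem~\ref{mainlemma} is available only through the parametric pair $B(t(u))$, $t(u)$, with $\theta$ entering implicitly through $\alpha_1(\theta),\alpha_2(\theta)$ and hence through $u_0$, $B(0)$ and $c(T)-c(0)$, so that writing the $n$-th $\theta$-derivative in closed form rapidly becomes unwieldy even though it is perfectly well defined. I would therefore also remark that, in practice, one may instead expand $\exp\{-B(0)\lambda(0)-(c(T)-c(0))\}$ directly as a power series in $\theta$ and read off the coefficients, thereby avoiding the repeated differentiation.
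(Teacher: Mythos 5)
Your proof is correct and is exactly the standard probability-generating-function inversion that the paper implicitly relies on: the paper states this corollary without any proof, so your term-by-term differentiation of the power series $\sum_n P(N(T)=n\mid\lambda_0)\theta^n$ inside its radius of convergence is the right (and essentially only) argument. You also rightly flag that the displayed equation (\ref{probabilities}) is only correct when read with the normalising factor $\tfrac{1}{n!}$ and the evaluation at $\theta=0$, i.e.
\begin{equation*}
P(N(T)=n\mid\lambda_0)=\frac{1}{n!}\,\frac{\partial^{n}}{\partial\theta^{n}}E\bigl(\theta^{N(T)}\mid\lambda_0\bigr)\Big|_{\theta=0},
\end{equation*}
which the paper leaves implicit; one minor caveat is that the almost-sure finiteness of $N(T)$ (non-explosion, guaranteed under the standing assumption $\beta\delta>1$) is what makes the point masses sum to one, but even without it the differentiation argument recovers each $P(N(T)=n\mid\lambda_0)$, so your remark that the stability condition is not essential to the differentiation step is accurate.
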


	\section{Application to Valuation of Synthetic CDOs}

	In this section, we demonstrate the application of the proposed model in obtaining a closed-form expression for the tranche spread of a synthetic CDO.
	
	\subsection{The Cash Flow in a Synthetic CDOs and Valuation}
	
	We begin with a finite time horizon $T > 0$. Consider a portfolio of $N$ equally weighted single name CDS with default times of $i$th, $i=1,2,\ldots,N$, constituent of portfolio be denoted by $\tau_i$ and $D_i(t)=I_{\{\tau_i\leq t\}}$ be the default indicator. {\color{black} Further, we assume that recovery rate of each constituent is denoted by $w_i$ and is a constant $w_i\in[0,1),~i=1,2,\ldots,N$. Also, the nominal value for each constituent is assumed to be 1. Let $r$ denote the risk-free interest rate which is assumed to be a constant. Then, the loss of the portfolio $L(t)$, at any time $t$ is given by
		\begin{equation}\label{loss}
			L(t)=\frac{1}{N}\sum_{i=1}^N(1-w_i)D_i(t),
		\end{equation}
		which is a pure jump process}. A synthetic CDO with $M$ tranches is described by the points
	$$(0\% =)~0 = k_0 < k_1 < \ldots < k_{i-1} < k_{i} <\ldots < k_M = 1~(=100\%),$$
	where $[k_{i-1},k_{i}]$, $i=1,2,\ldots,M$ represent a tranche which covers losses in the portfolio between $k_{i-1}$ (called attachment point) and $k_{i}$ (called detachment point). More specifically, in a tranche $[k_{i-1},k_{i}]$, the protection seller is obliged to compensate the protection buyer for the losses occurring in the interval $[k_{i-1},k_{i}]$ in return to a premium that he receives periodically. The premium is proportional to the current outstanding on the value of the tranche. For a tranche $[k_{i-1},k_i]$, the payments made by the protection seller are at the corresponding default times and its expected value at time 0 is denoted by $V_i(T)$. On the other hand, the expected value of premium paid by the protection buyer is denoted by $W_i(T)$. The accumulated loss $L_{i}(t)$ of $i$th tranche at time $t$ is
	\begin{equation}\label{protectionleg1}
		L_{i}(t)=(L(t)-k_{i-1})I_{\{L(t)\in[k_{i-1},k_{i}]\}}+(k_{i}-k_{i-1})I_{\{L(t)\geq k_{i}\}},
	\end{equation}
	where $I_{A}$ is the indicator function of the set $A$. {\color{black} We can note that $L_i(t)$ is also a pure jump process such that the default payments are nothing but increments of the process $L_i(t)$. The expected value, $V_i(T)$, for $i$th tranche is given by
		\begin{equation}\label{protectionleg}
			V_{i}(T)=E\bigg(\int_0^Te^{-rt}dL_i(t)\bigg)=e^{-rT}E(L_{i}(T))+r\int_0^Te^{-rt}E(L_{i}(t))dt.
		\end{equation}
		where the second equality follows by applying integration by parts for Lebesgue–Stieltjes measures along with Fubini–Tonelli theorem}.
	
	Assume that the $n$ premium payment dates are $0 < t_1 < t_2 <\ldots< t_n = T$, then the expected value, $W_i(T)$, for $i$th tranche is given by
	\begin{equation}\label{premiumleg}
		W_{i}(T) = c_{i}(T)\sum_{j=1}^ne^{-rt_j}(\Delta k_{i}-E(L_{i}(t_j)))\Delta t_j,
	\end{equation}
	where we have assumed that there is no accrual payments at default, $\Delta t_j=(t_j-t_{j-1})$ denotes the time duration between the payments (here measured in fractions of a year), $\Delta k_{i}=k_{i}-k_{i-1}$ is the nominal size of the $i$th tranche (as a fraction of the total nominal value of the portfolio), and $c_{i}(T)$ is called the spread of the tranche.  $c_i(T)$ can be obtained by equating the value of two expected premiums, i.e., $V_{i}(T) = W_{i}(T)$. The tranche spreads $c_{i}(T)$, for $i=1,2,\ldots,M$, are functions of the time horizon $T$.
	
	Now, in order to obtain the spread of the $i$th tranche i.e., $c_i(T)$, we need to compute the expected loss of the $i$th tranche i.e., $E(L_{i}(t))$ which is given by
	(from Equation (\ref{protectionleg1})),
	\begin{equation}\label{trancheloss}
		E(L_{i}(t))=(k_{i}-k_{i-1})P(L(t)>k_{i})+\int_{k_{i-1}}^{k_{i}}(x-k_{i-1})dF_{L(t)}(x),
	\end{equation}
	where $F_{L(t)}(x)$ denotes the cumulative distribution function of the loss of the portfolio by time $t$, i.e.,
	$$F_{L(t)}(x) := P(L(t)\leq x).$$ 
	The cumulative loss process $L(t)$ being a pure jump process, Equation (\ref{trancheloss}) can further be written as 
	\begin{equation}\label{trancheloss2}
		E(L_{i}(t))=(k_{i}-k_{i-1})P(L(t)>k_{i})+\sum_{j=k_{i-1}}^{k_{i}}(j-k_{i-1})P(L(t)=j).
	\end{equation}
	Note that the index $j$ in summation corresponds only to possible values of the portfolio loss $L(t)$ at time $t$ that are in the interval $[k_{i-1},k_i]$. From Equation (\ref{trancheloss2}), it can be observed that we need to compute the distribution of the portfolio loss in order to find the tranche loss $E[L_{i}(t)]$. In order to find $P(L(t)=k)$ for some $k$, a possible value of the portfolio loss at time $t$, we proceed by finding the PGF of $L(t)$. One can then use the PGF to find the required probabilities of $L(t)$.
	\begin{eqnarray*}
		E(u^{L(t)})&=&\sum_{n=0}^{\infty}E\bigg(u^{\sum_{i=0}^N\frac{M_i}{N}D_i(t)}\mid N(t)=n\bigg)P(N(t)=n)
	\end{eqnarray*}
	Given $N(t)$, the default of each constituent is independent, which yields
	\begin{eqnarray*}
		E(u^{L(t)})&=&\sum_{n=0}^{\infty}\bigg(\prod_{i=1}^NE\bigg( u^{\frac{M_i}{N}D_i(t)}\mid N(t)=n\bigg)\bigg)P(N(t)=n)\\
		&=&\sum_{n=0}^{\infty}\bigg(\prod_{i=1}^N\bigg(1-P_i(t,n)+P_i(t,n) u^{\frac{M_i}{N}}\bigg)\bigg)P(N(t)=n).
	\end{eqnarray*}
	{\color{blue}
		The procedure to calculate the unknowns in the above equation, i.e., $P_i(t,n)$ and $P(N(t)=n)$ are given by Equations (\ref{marginal}) and (\ref{probabilities}) respectively. Note that the calculation of $P_i(t,n)$ requires PGF of $N(t)$ which has already been derived in Equation (\ref{pgf}).}

	\noindent
	Next, we discuss the case when the recovery rates are uniform, $w_i=w,~~i=1,2,\ldots,N$. 
	
	\begin{theorem}
		Assume that the recovery rates are uniform, i.e.,  $w_i=w,~~i=1,2,\ldots,N$. The expected tranche losses of a synthetic CDO are given by
		\begin{eqnarray*}
			E(L_{i}(t))&=&(k_{i}-k_{i-1})-k_{i}\sum_{j=0}^{[\frac{Nk_{i}}{1-w}]}\sum_{n=0}^{\infty}P(D(t)=j\mid N(t)=n)P(N(t)=n)\\
			&& +k_{i-1}\sum_{j=0}^{[\frac{Nk_{i-1}}{1-w}]}\sum_{n=0}^{\infty}P(D(t)=j\mid N(t)=n)P(N(t)=n)\\
			&& +\sum_{j=[\frac{Nk_{i-1}}{1-w}]}^{[\frac{Nk_{i}}{1-w}]}j\sum_{n=0}^{\infty}P(D(t)=j\mid N(t)=n)P(N(t)=n).
		\end{eqnarray*}
	\end{theorem}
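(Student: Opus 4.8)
The plan is to push everything through the law of the total default count $D(t)$, whose distribution is already in hand from Equation~(\ref{joint}) together with the binomial conditional~(\ref{binomial}) and the PGF of $N(t)$ in~(\ref{pgf})--(\ref{probabilities}). The key simplification is that under the uniform-recovery hypothesis $w_i=w$ the loss definition~(\ref{loss}) collapses to $L(t)=\frac{1-w}{N}\,D(t)$, so $L(t)$ is a deterministic strictly increasing affine image of the integer-valued random variable $D(t)$. Hence the law of $L(t)$ lives on the finite grid $\{\frac{(1-w)j}{N}:j=0,1,\ldots,N\}$ with $P(L(t)=\frac{(1-w)j}{N})=P(D(t)=j)$, and in particular $F_{L(t)}$ is a pure step function, so the Lebesgue--Stieltjes integral in~(\ref{trancheloss}) is really a finite sum.

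First I would translate the two contributions to $E(L_i(t))$ in~(\ref{trancheloss}) into statements about $D(t)$. For the tail term, $\{L(t)>k_i\}=\{D(t)>\frac{Nk_i}{1-w}\}$, and since $D(t)$ only takes integer values this cuts exactly at the floor, giving $P(L(t)>k_i)=1-\sum_{j=0}^{[Nk_i/(1-w)]}P(D(t)=j)$. For the integral term, $\int_{k_{i-1}}^{k_i}(x-k_{i-1})\,dF_{L(t)}(x)$ becomes the sum of $(x-k_{i-1})P(L(t)=x)$ over the grid points $x=\frac{(1-w)j}{N}$ lying in $[k_{i-1},k_i]$, i.e.\ over $j$ from $[\frac{Nk_{i-1}}{1-w}]$ to $[\frac{Nk_i}{1-w}]$, with weight $P(D(t)=j)$. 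Substituting these into~(\ref{trancheloss}), expanding $(k_i-k_{i-1})P(L(t)>k_i)$, and splitting $(x-k_{i-1})$ into its two pieces produces four groups of terms carrying the coefficients $k_i-k_{i-1}$, $-k_i$, $+k_{i-1}$, and the $j$-linear weight, with summation ranges that, after merging the overlapping partial sums, become exactly $[0,[\frac{Nk_i}{1-w}]]$, $[0,[\frac{Nk_{i-1}}{1-w}]]$, and $[[\frac{Nk_{i-1}}{1-w}],[\frac{Nk_i}{1-w}]]$.

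Then I would replace each probability $P(D(t)=j)$ by $\sum_{n=0}^{\infty}P(D(t)=j\mid N(t)=n)P(N(t)=n)$ via~(\ref{joint}), recalling that the conditional term is the binomial~(\ref{binomial}) with marginals~(\ref{marginal}) and that $P(N(t)=n)$ comes from~(\ref{probabilities}); no new computation is needed beyond this substitution and the collection of the four term-groups, which yields the stated identity.

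The only delicate point is the bookkeeping at the tranche boundaries: which integers $j$ satisfy $\frac{(1-w)j}{N}\in[k_{i-1},k_i]$, equivalently whether $\frac{Nk_{i-1}}{1-w}$ and $\frac{Nk_i}{1-w}$ happen to be integers, since this governs inclusion of the endpoints and hence the precise floor indices in the three sums; together with the (routine) justification that a step-function $F_{L(t)}$ turns the Stieltjes integral into a finite sum. Everything else is elementary algebra, and all the ingredients — the binomial conditional and the PGF-based probabilities $P(N(t)=n)$ — are already available from the preceding sections.
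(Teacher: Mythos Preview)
Your proposal is correct and follows essentially the same route as the paper: reduce $L(t)$ to $\frac{1-w}{N}D(t)$ under uniform recovery, convert $F_{L(t)}$ and the Stieltjes integral in~(\ref{trancheloss}) into finite sums over integer values of $D(t)$ with floor cut-offs, and then substitute~(\ref{joint}). The only cosmetic difference is the order of the algebra---the paper first writes $\int_{k_{i-1}}^{k_i}(x-k_{i-1})\,dF_{L(t)}=\int_{k_{i-1}}^{k_i}x\,dF_{L(t)}-k_{i-1}\bigl(F_{L(t)}(k_i)-F_{L(t)}(k_{i-1})\bigr)$ and then combines with the tail term, whereas you expand the tail term first and merge afterward---but the content and the endpoint caveat you flag are identical.
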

	\begin{proof}
		{\color{blue} From Equations (\ref{loss}) and (\ref{totaldefaults}), we have $L(t)=(1-w)\frac{D(t)}{N}$. Therefore, for $x\in [0,1-w]$, it holds that
			\begin{equation}
				F_{L(t)}(x)= P(L(t)\leq x)=P\left(D(t)\leq \frac{xN}{1-w}\right)=\sum_{i=0}^{[\frac{xN}{1-w}]}P(D(t)=i).
			\end{equation}
			where $P(D(t)=i)$ is given by Equation (\ref{joint}). Therefore, the expected tranche loss can be obtained as follows: } 
		\begin{eqnarray*}
			E(L_{i}(t))&=&(k_{i}-k_{i-1})P(L(t)\geq k_{i})+\int_{k_{i-1}}^{k_{i}}(x-k_{i-1})dF_{L(t)}(x)\\
			&=& (k_{i}-k_{i-1})(1-P(L(t)\leq k_{i})+\int_{k_{i-1}}^{k_{i}}xdF_{L(t)}(x)-k_{i-1}(F_{L(t)}(k_{i})-F_{L(t)}(k_{i-1})\\
			&=&k_{i}-k_{i-1}-k_{i}F_{L(t)}(k_{i})+k_{i-1}F_{L(t)}(k_{i-1})+\int_{k_{i-1}}^{k_{i}}xdF_{L(t)}(x).
		\end{eqnarray*}
		
		Now, $\int_{k_{i-1}}^{k_{i}}xdF_{L(t)}(x)=E(L(t)\cap [k_{i-1},k_{i}])$. Hence, by definition
		\begin{equation*}
			\int_{k_{i-1}}^{k_{i}}(x)dF_{L(t)}(x)=\sum_{j=[\frac{Nk_{i-1}}{1-w}]}^{[\frac{Nk_{i}}{1-w}]}jP(D(t)=j),
		\end{equation*}
		since $L(t)\in[k_{i-1},k_i]$ is equivalent to $D(t)\in[[\frac{Nk_{i-1}}{1-w}],[\frac{Nk_{i}}{1-w}]]$. Hence, the result follows.
	\end{proof}
	
	\subsection{Numerical Illustrations}
	This section provides sensitivity analysis on the spread of the tranches and comparison with models proposed in the literature, to demonstrate the impact of jump clustering and various parameters governing the contagion. In what follows, we consider three tranches namely $0\%-7\%, 7\%-12\%$ and $100\%$ and the number of constituents in the portfolio are $M=50$. Table \ref{basecase} lists the values  of the parameters considered in the base case. Further, for numerical experiments, the value of all the parameters except one under study is kept unchanged. Finally, we assume that the premium payment dates are quarterly and hence the parameters are assumed to be unit per quarter. For instance, $\lambda_0=1.5$ is per 3 months period.
	
	\begin{table}[h!]
		\begin{center}
			\caption{Values of the Parameters in the Base Case}\label{basecase}
			\begin{tabular}{cc|cc}
				\hline
				Parameters& Values & Parameters& Values\\
				\hline
				$N$ & 50 & $r$ & 0.03 \\
				$\sigma_{i}$ & 0.4 & $\sigma$ & 0.4 \\
				$\lambda_{i,0}$&1.5 & $\lambda_{0}$& 1.5\\
				$\eta_{i}$ & 1.5 & $\eta$ & 1.5 \\
				$\beta_{i_y}$&1.5 & $\beta_y$& 1.5\\
				$\delta_i$& 2 &$\delta$&2\\
				$w$&0.4&$\theta_i$&0.97\\
				$\ell_i$&0.5&&\\
				\hline
			\end{tabular}
		\end{center}
	\end{table}
	In Table \ref{maturityvary}, the spread of each tranche is given for different maturity times for the Poisson process (constant intensity), affine jump-diffusion model with no contagion effects and for the proposed model. From the Table \ref{basecase}, we observe that the spread of each tranche is an increasing function of time for all the models. Also, we compare the spreads obtained from the proposed model to the spreads obtained from a Poisson process model (constant intensity of default) given in Table \ref{maturityvarypoisson} and spreads obtained from a Cox process with affine jump diffusion intensity with no self-exciting component given in Table \ref{maturityvarynoself}. For the comparison purpose, the default intensities in case of Poisson process model and Cox process model are selected so that the expected number of jumps per unit time are same for all the models. It is observed from Table \ref{maturityvary} and Table \ref{maturityvarypoisson}, that there is flat spread term structure whereas in case of the proposed model and tranche spread is much higher in case of Poisson process model. Possible reason for this behavior is due to the assumption that in Poisson process model, first jump causes the default of the firm whereas in the proposed model, each firm has some resistance to the bad events. Also, since relatively lesser number of defaults wipe out the lower tranches causing flat term structure of the spread for these tranches.
	
	\begin{table}[h!]
		\begin{center}
			\caption{Spread of tranches (in 100 bps) with maturity time $T$ in the Base Case (Proposed Model)}\label{maturityvary}
			\begin{tabular}{|c|c|c|c|c|}
				\hline
				& $T=3$ & $T=4$ & $T=5$ & $T=6$ \\
				\hline
				Tranche 1 & 0.5343 & 0.5584 & 0.5654 & 0.5689 \\
				\hline
				Tranche 2 & 0.0421 & 0.0921 & 0.1359 & 0.1660 \\
				\hline
				Tranche 3 & 0.00006 & 0.0001 & 0.0005 & 0.0012 \\
				\hline
			\end{tabular}
		\end{center}
	\end{table}

	\begin{table}[h!]
		\begin{center}
			\caption{Spread of tranches (in 100 bps) with maturity time $T$ in the Base Case (Poisson process model)}\label{maturityvarypoisson}
			\begin{tabular}{|c|c|c|c|c|}
				\hline
				& $T=3$ & $T=4$ & $T=5$ & $T=6$ \\
				\hline
				Tranche 1 & 3.9439 & 3.9439 & 3.9439 & 3.9439 \\
				\hline
				Tranche 2 & 2.6154 & 2.6154 & 2.6154 & 2.6154 \\
				\hline
				Tranche 3 & 0.0779& 0.0614 & 0.0506 & 0.0432 \\
				\hline
			\end{tabular}
		\end{center}
	\end{table}
	
	\begin{table}[h!]
		\begin{center}
			\caption{Spread of tranches (in 100 bps) with maturity time $T$ in the Base Case (Affine jump diffusion without self-exciting jumps)}\label{maturityvarynoself}
			\begin{tabular}{|c|c|c|c|c|}
				\hline
				& $T=3$ & $T=4$ & $T=5$ & $T=6$ \\
				\hline
				Tranche 1 & 1.9368 & 1.9368&1.9368&1.9368 \\
				\hline
				Tranche 2 & 0.9758 & 0.9758&0.9758&0.9758 \\
				\hline
				Tranche 3 & 0.0357& 0.0357 & 0.0334 & 0.0306  \\
				\hline
			\end{tabular}
		\end{center}
	\end{table}

	Figures \ref{TP1lambda}, \ref{TP2lambda} and \ref{TP3lambda} present the variation of tranche spread for each tranche against the change in the initial intensity of the common process i.e., $\lambda_0$. From Figures \ref{TP1lambda}, \ref{TP2lambda} and \ref{TP3lambda}, it is observed that a increase in the value of $\lambda_0$ increases the spread of each tranche and the effect becomes more significant with the increase in maturity. Since jump clustering increases with increase in $\lambda_0$, hence jump clustering have a positive effect, i.e., increases the tranche spread.
	
	Similarly, Figures \ref{TP1betay}, \ref{TP2betay} and \ref{TP3betay} presents the variation of tranche spread for each tranche against the change in the $\beta_y$ (and hence the mean of the jump size) of the common process. From Figures \ref{TP1betay}, \ref{TP2betay} and \ref{TP3betay}, we observe that the tranche spread increases for each tranche with the decrease in the value of parameter of the jump size distribution i.e., $\beta_y$. As we know, larger the jump size i.e., smaller the value of $\beta_y$ because expected jump size is reciprocal of $\beta_y$, implies more clustering of jumps. Hence, we conclude that the tranche spread for each tranche increases with the clustering of jumps.
	
	In conclusion, higher the value of $\lambda_0$ and smaller the value of $\beta_y$ (i.e., self exciting jumps of larger size), the more frequent is the clustering of events, and hence higher is the spread and steeper is the term structure of spread.

	Figures \ref{TP1a}, \ref{TP2a} and \ref{TP3a} give the behavior of tranche spread for each tranche against the change in the mean reverting level of the intensity process governing the common factor i.e., $\eta$. We observe from Figures \ref{TP1a}, \ref{TP2a} and \ref{TP3a} that when mean reverting level $\eta$ is less than the initial intensity $\lambda_0$, the spread of each tranche is robust, i.e., does not change significantly. However, as $\eta$ increases beyond the initial intensity $\lambda_0$, spread of tranche increases significantly with increase in $\eta$.

	It can be observed from Figures \ref{TP1w}, \ref{TP2w} and \ref{TP3w}, as the recovery rate increases, the accumulated loss due to observed defaults decreases and hence the tranche spread decreases as expected. Further, Figures \ref{TP1sigma}, \ref{TP2sigma} and \ref{TP3sigma} show the robustness of the tranche spread with respect to the volatility of the diffusion term governing the default intensity. Also, it can be observed that the sensitivity of the spread of tranches increases with the decreases with the increase in seniority of the tranches. In other words, senior tranche spread are more robust with respect to volatility of diffusion term.
	
	\section{Conclusion}

	In this article, we proposed a framework to model the joint default of firms by taking into account the clustering phenomenon observed empirically in the occurrence of default events. The point process governing the default arrivals is assumed to be composed of idiosyncratic and systematic components. The default intensity of each process is further being modeled by a self-exciting Hawkes process with a diffusion component to address the stochastic behavior between the two jumps in the intensity process. One of the distinguishing features of the proposed framework is that it allows firms to have resistance against the adverse events and thus doesn't assume the default to be caused by a first adverse event. We presented a parametric closed-form solution for the joint default probabilities represented in terms of the probability generating function of a dynamic contagion process. Finally, we studied the application in pricing the synthetic CDO in a bottom-up approach. The proposed framework is capable of addressing the chaotic dynamics and high complexity in the pricing of CDOs. Although, we have shown by numerical illustrations that the proposed model is appropriate to address the clustering of defaults observed in the default events. One of the possible future work is to formulate how to calibrate the model parameters using real data and perform the parameter estimation.

	\textbf{Acknowledgements}
		The first author is thankful to Council of Scientific and Industrial Research (CSIR), India (award number 09/086 (1207)/2014-EMR-I) for the financial grant. Authors are thankful to the editor and the anonymous reviewer for their valuable suggestions and comments which helped improve the paper to a great extent.
	
	\bibliographystyle{tfs}       
	\bibliography{mybibbottomup}   

	\begin{figure}[ht!]
		\begin{center}
			\subfigure[Spread of Tranche 1 (bps) against $\lambda_0$]{%
				\includegraphics[scale=0.4500]{}
				\label{TP1lambda}}
			\quad
			\subfigure[Spread of Tranche 1 (bps) against $\beta_y$]{%
				\includegraphics[scale=0.4500]{}
				\label{TP1betay}}
			\quad
			\subfigure[Spread of Tranche 1 (bps) against $\eta$]{%
				\includegraphics[scale=0.45000]{}
				\label{TP1a}}
			\quad
			\subfigure[Spread of Tranche 1 (bps) against $w$]{%
				\includegraphics[scale=0.4500]{}
				\label{TP1w}}
			\quad
			\subfigure[Spread of Tranche 1 (bps) against $\sigma$]{%
				\includegraphics[scale=0.450]{}
				\label{TP1sigma}}
			\caption{Variation of Spread of Tranche 1 against different parameters governing the default of firms for different maturities for $T=3,4,5,6$ years}\label{TP1}
		\end{center}
	\end{figure}

	\begin{figure}[ht!]
		\begin{center}
			\subfigure[Spread of Tranche 2 (bps) against $\lambda_0$]{%
				\includegraphics[scale=0.4500]{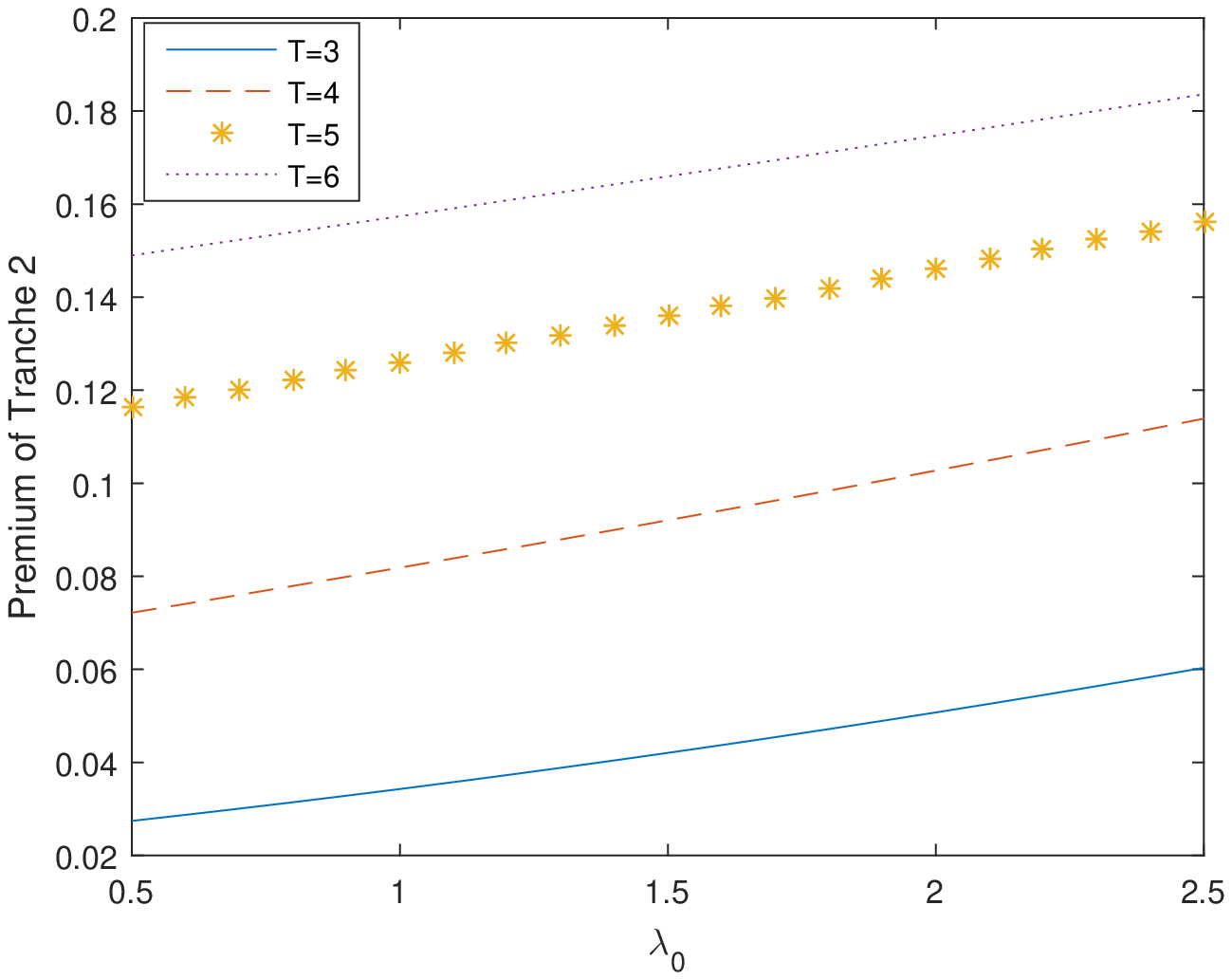}
				\label{TP2lambda}}
			\quad
			\subfigure[Spread of Tranche 2 (bps) against $\beta_y$]{%
				\includegraphics[scale=0.4500]{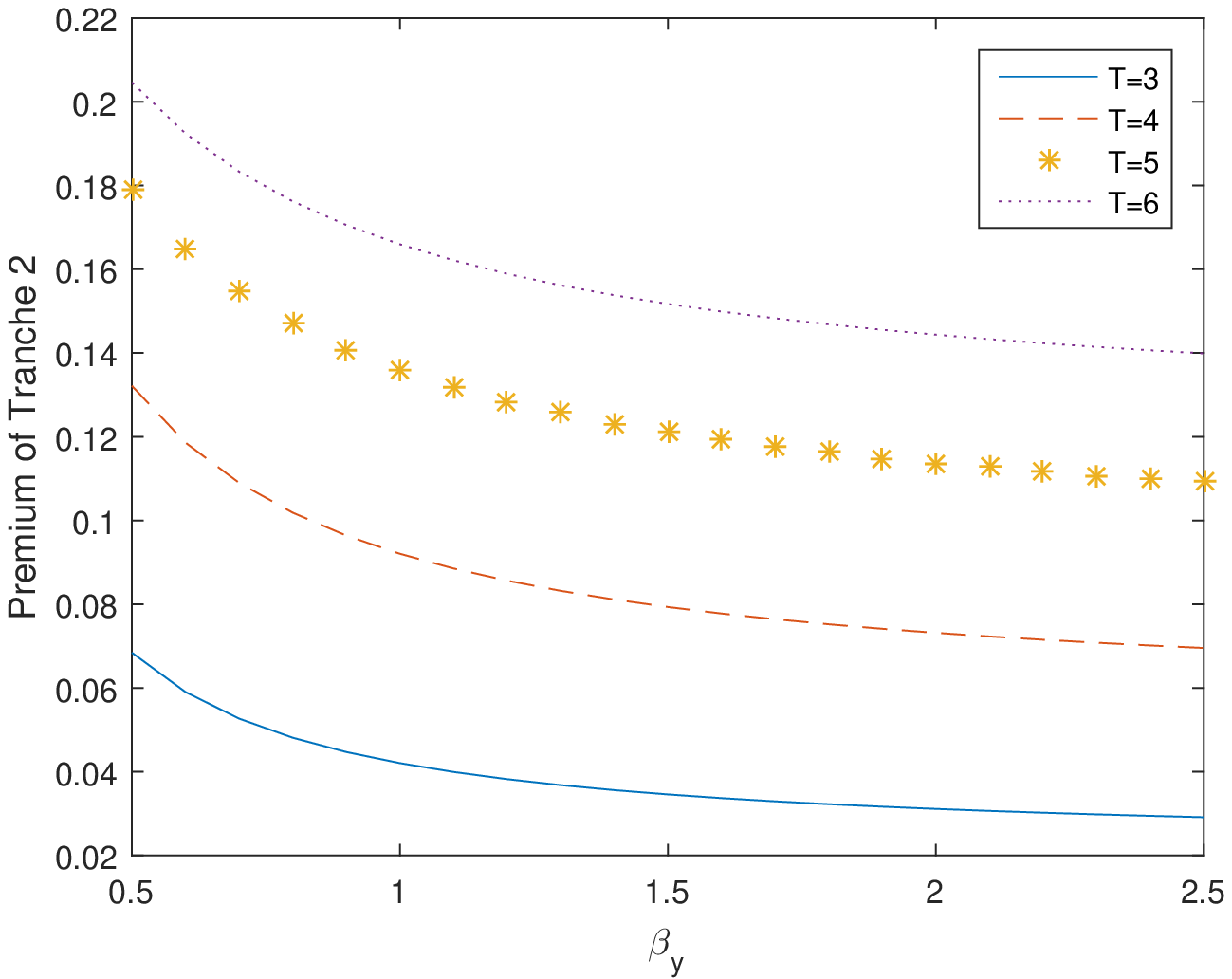}
				\label{TP2betay}}
			\quad
			\subfigure[Spread of Tranche 2 (bps) against $\eta$]{%
				\includegraphics[scale=0.4500]{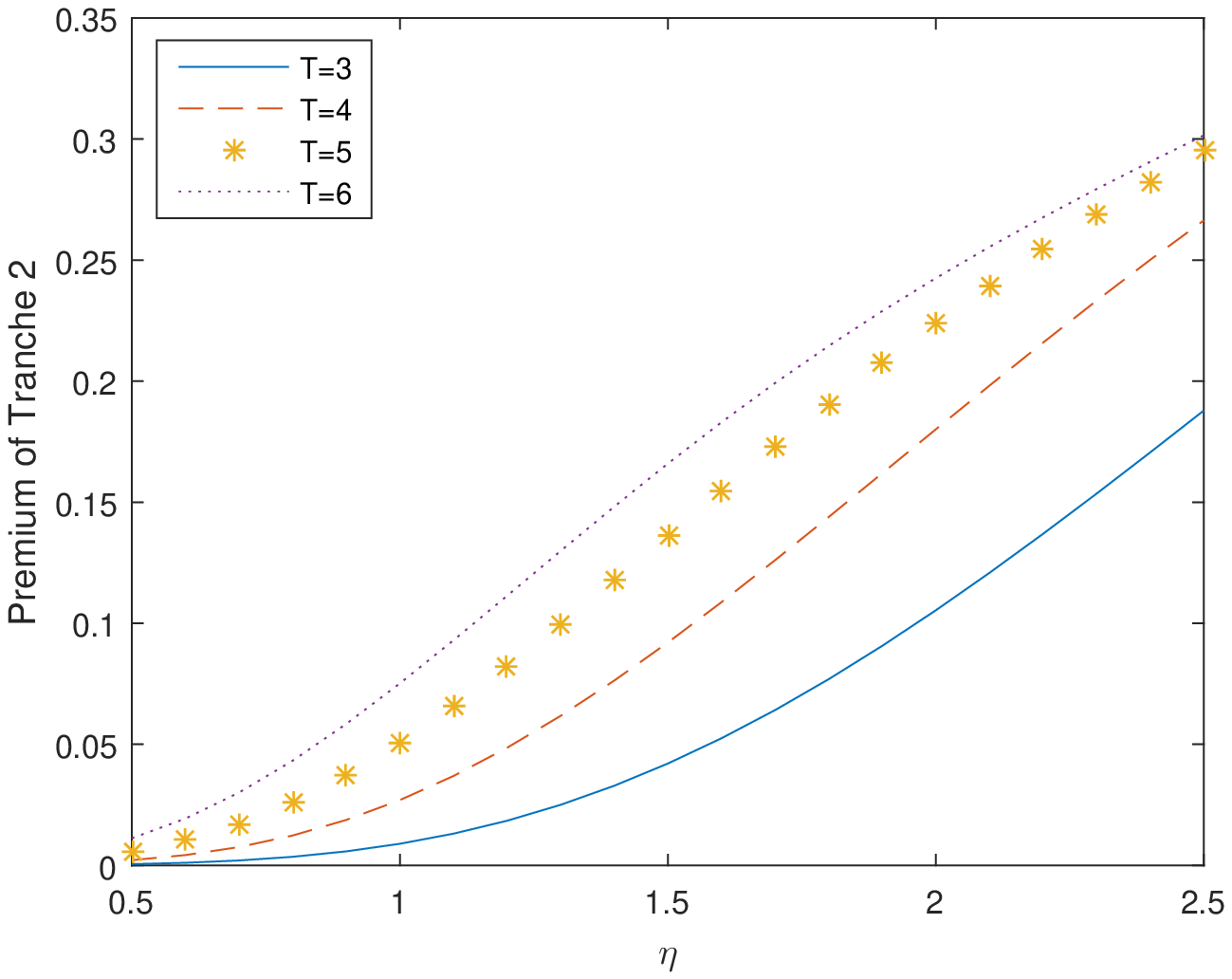}
				\label{TP2a}}
			\quad
			\subfigure[Spread of Tranche 2 (bps) against $w$]{%
				\includegraphics[scale=0.4500]{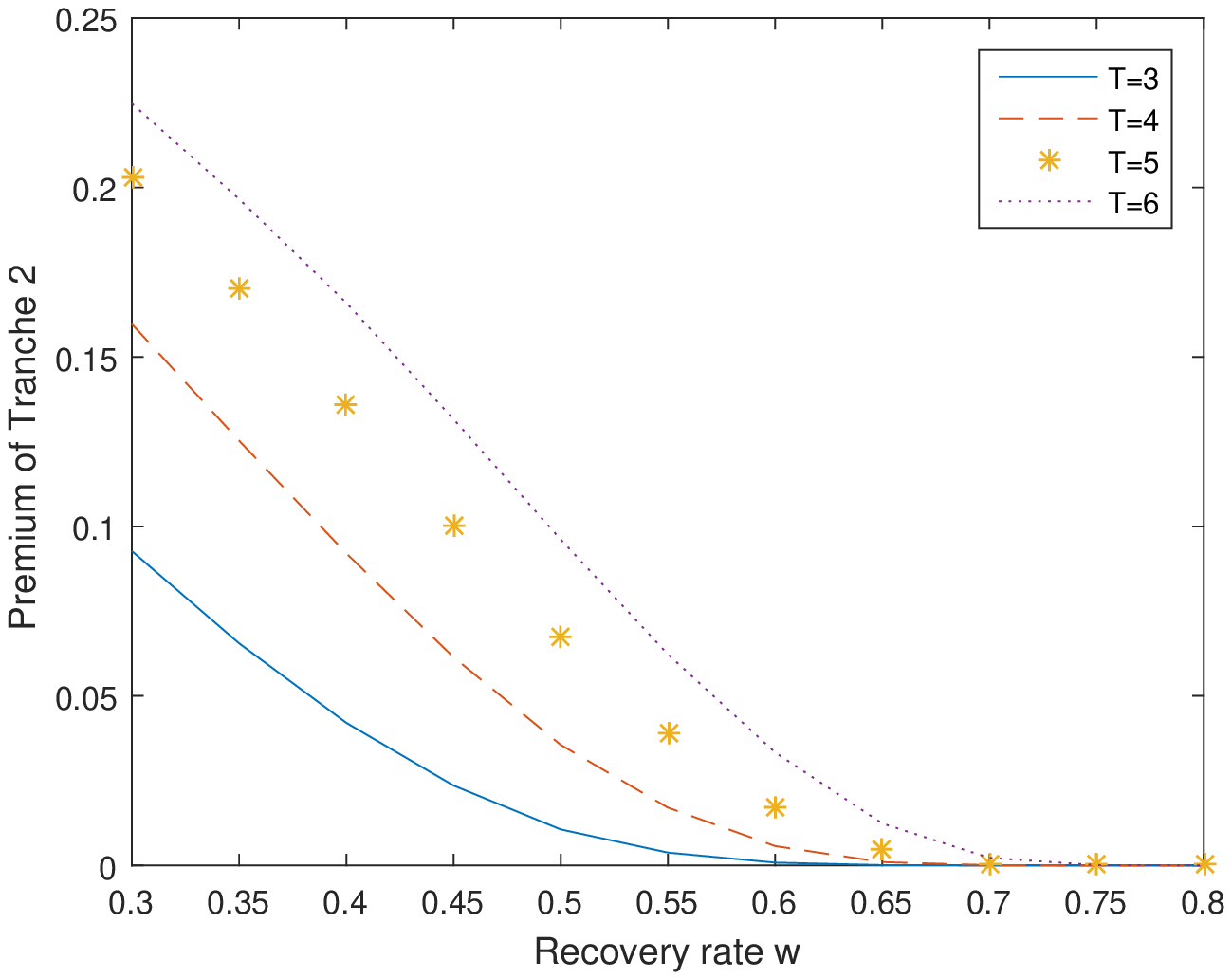}
				\label{TP2w}}
			\quad
			\subfigure[Spread of Tranche 2 (bps) against $\sigma$]{%
				\includegraphics[scale=0.4500]{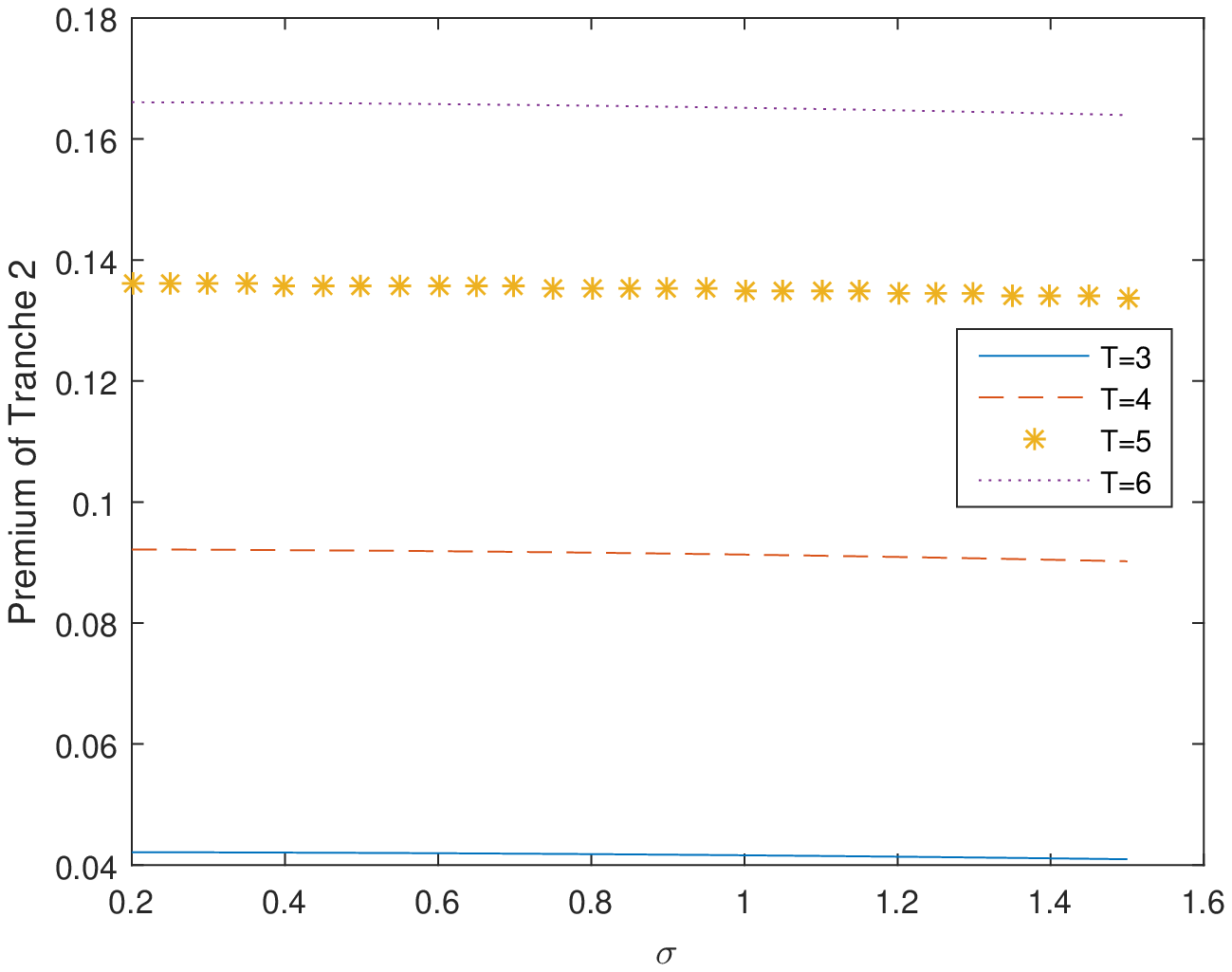}
				\label{TP2sigma}}
			\caption{Variation of Spread of Tranche 2 against different parameters governing the default of firms for different maturities for $T=3,4,5,6$ years}\label{TP2}
		\end{center}
	\end{figure}
	
	\begin{figure}[ht!]
		\begin{center}
			\subfigure[Spread of Tranche 3 (bps) against $\lambda_0$]{%
				\includegraphics[scale=0.4500]{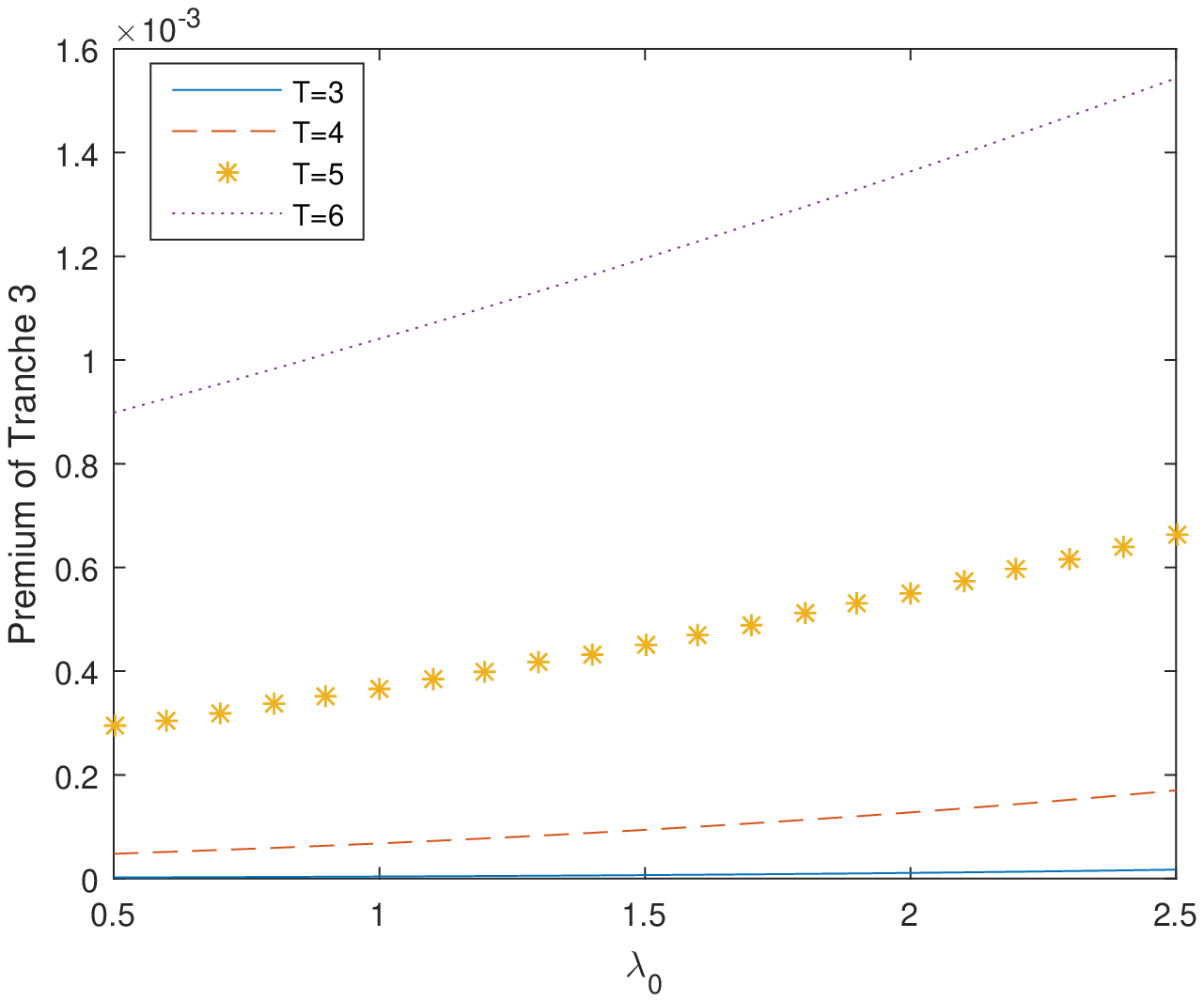}
				\label{TP3lambda}}
			\quad
			\subfigure[Spread of Tranche 3 (bps) against $\beta_y$]{%
				\includegraphics[scale=0.4500]{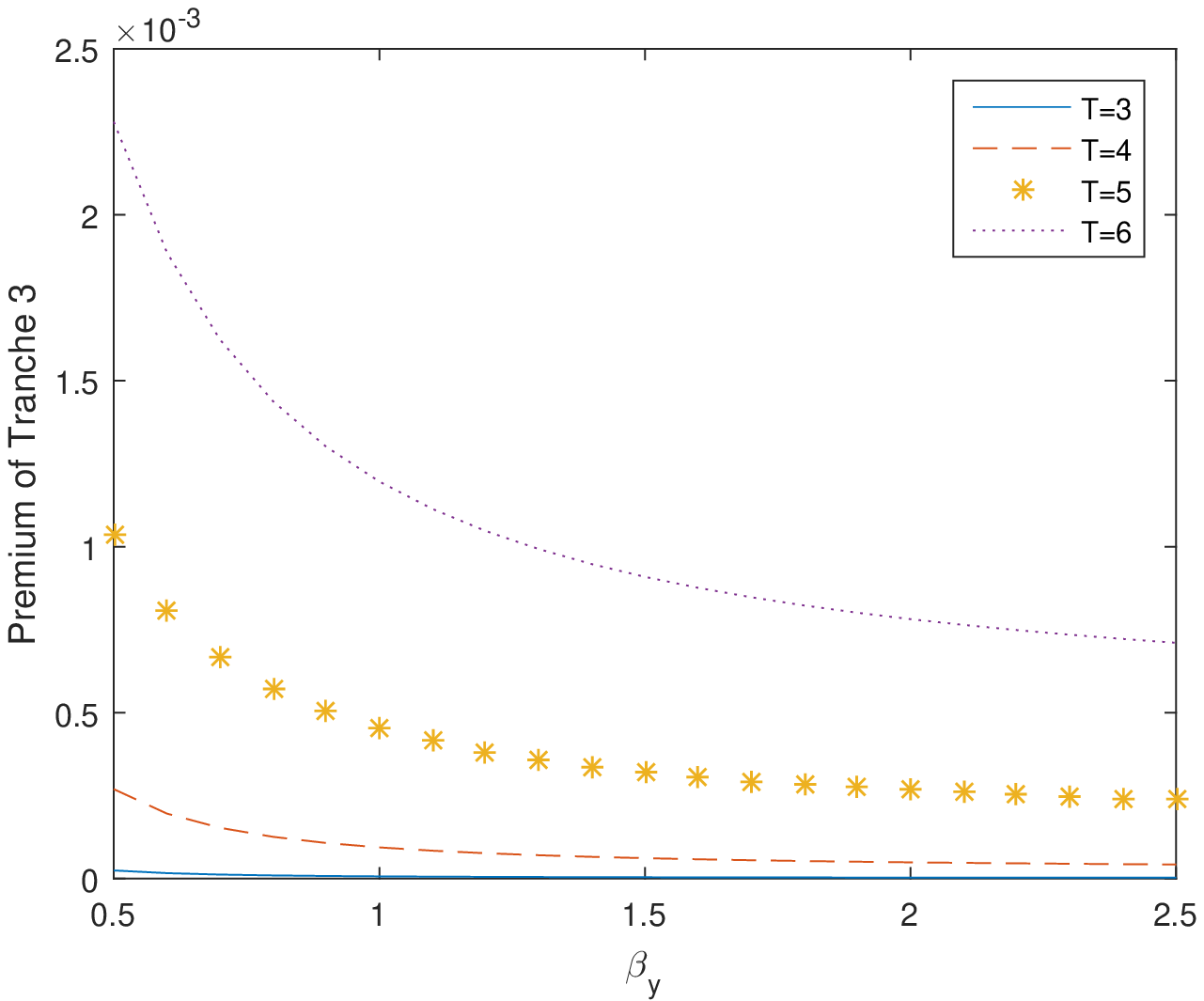}
				\label{TP3betay}}
			\quad
			\subfigure[Spread of Tranche 3 (bps) against $\eta$]{%
				\includegraphics[scale=0.4500]{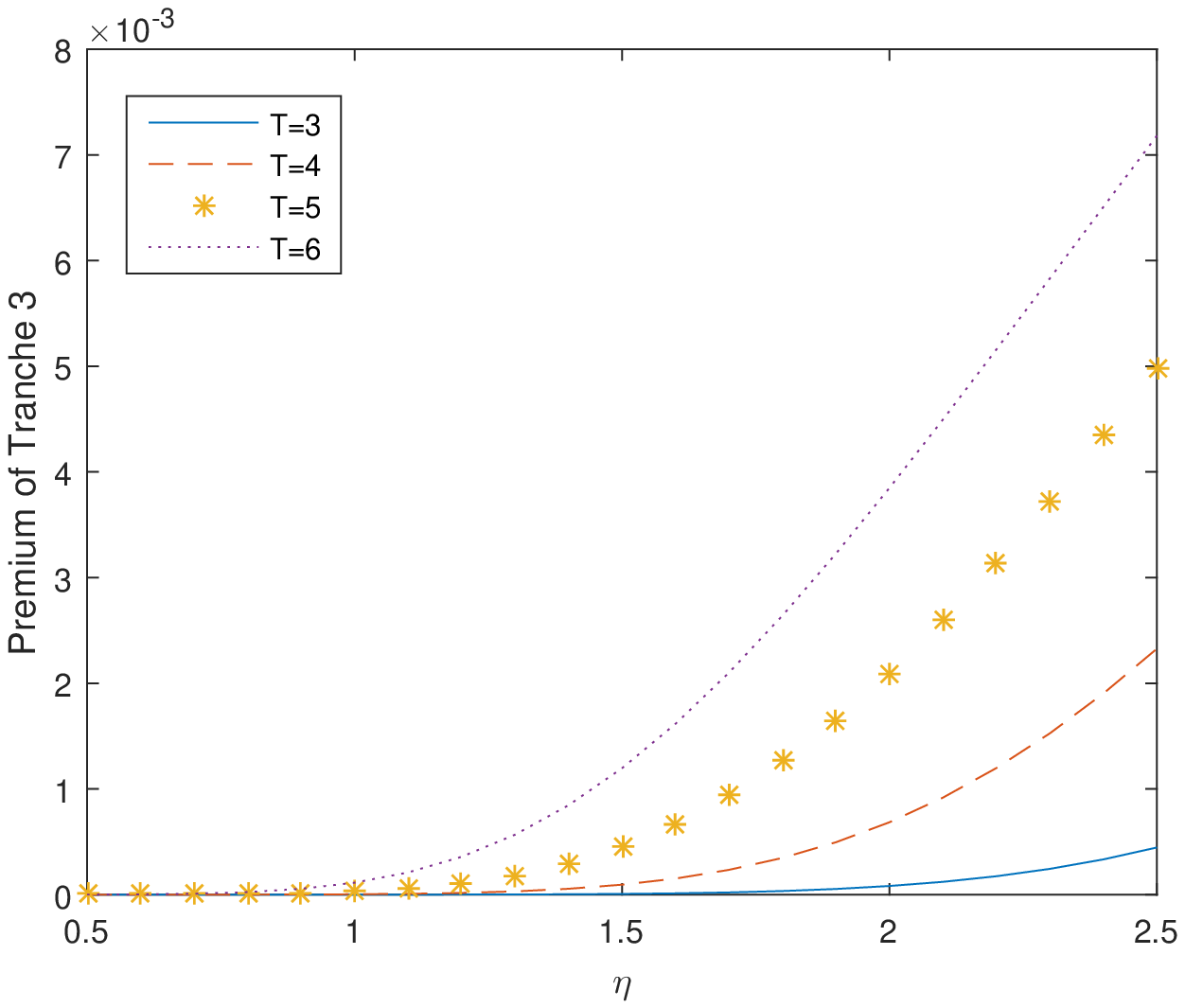}
				\label{TP3a}}
			\quad
			\subfigure[Spread of Tranche 3 (bps) against $w$]{%
				\includegraphics[scale=0.4500]{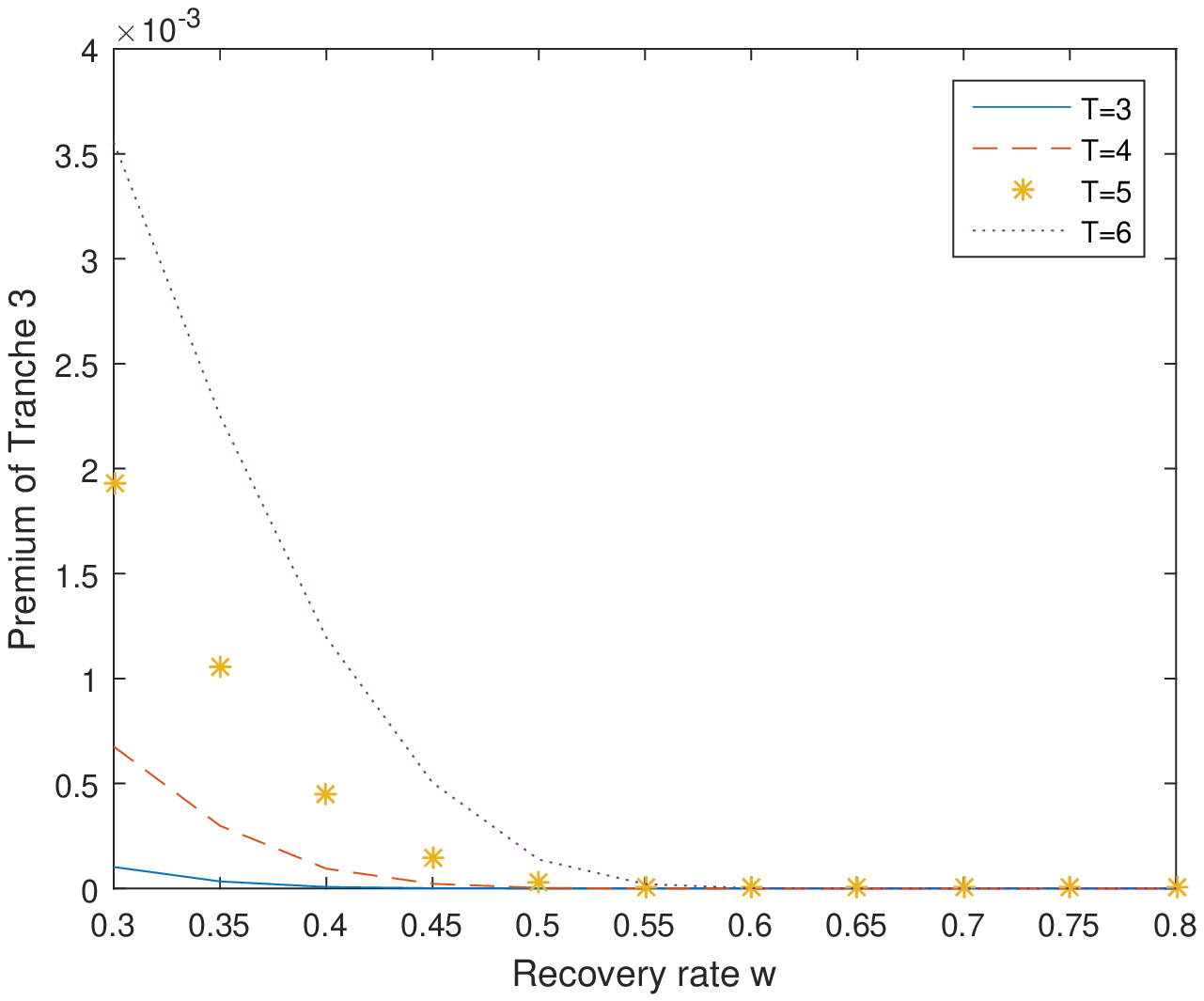}
				\label{TP3w}}
			\quad
			\subfigure[Spread of Tranche 3 (bps) against $\sigma$]{%
				\includegraphics[scale=0.4500]{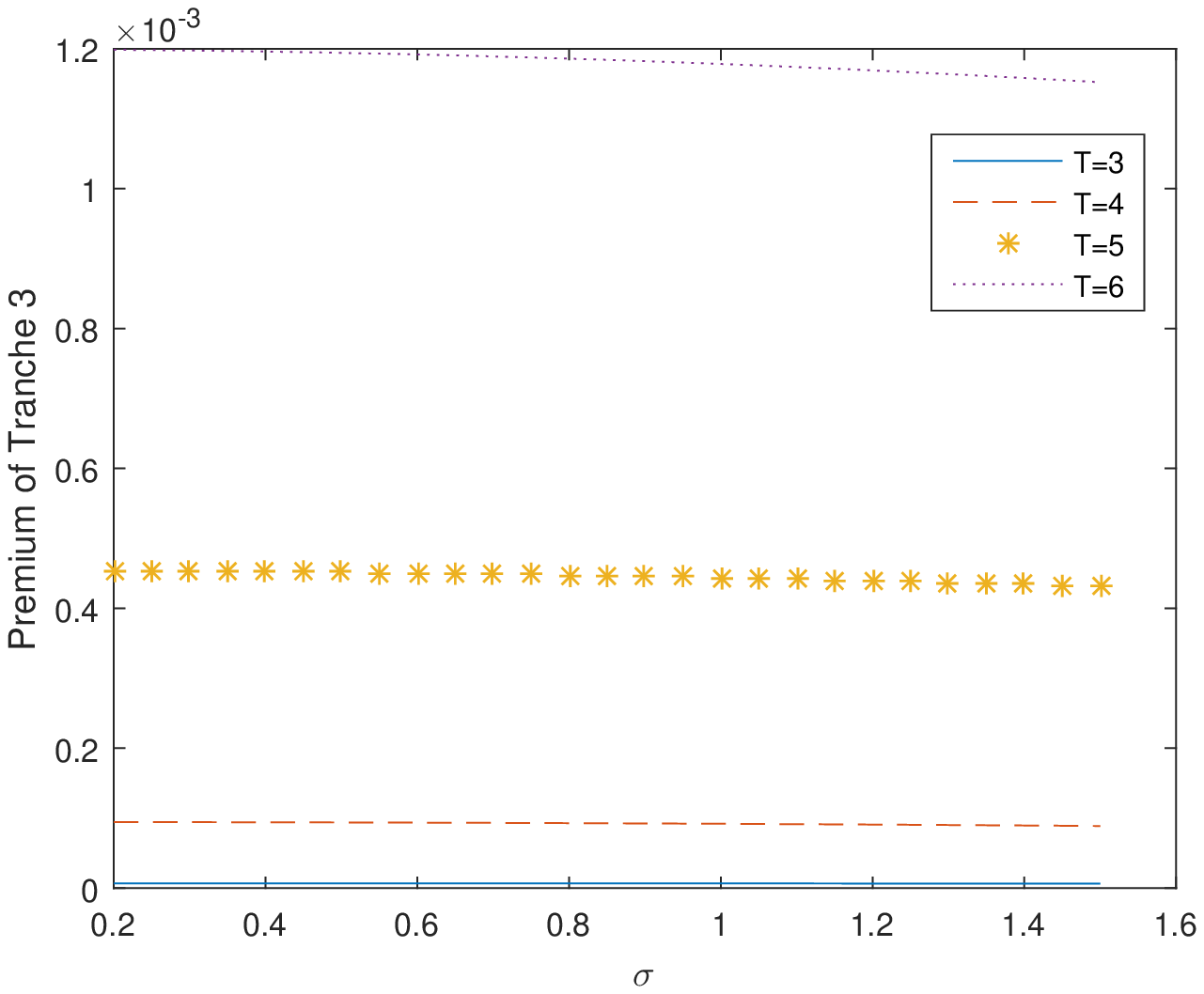}
				\label{TP3sigma}}
			\caption{Variation of Spread of Tranche 3 against different parameters governing the default of firms for different maturities for $T=3,4,5,6$ years}\label{TP3}
		\end{center}
	\end{figure}
	
\end{document}